\newtheorem*{thm*}{Theorem}
\newtheorem{prop}{Proposition}
\newtheorem*{prop*}{Proposition} 
\newtheorem{cor}{Corollary}     
\newtheorem{lem}{Lemma}
\newtheorem*{lem*}{Lemma}
\newtheorem{rem}{Remark} 
\newtheorem*{rem*}{Remark}
\newcommand{\B}{\mathcal{B}}
\renewcommand{\d}{\mathrm{d}}
\title{Inside the West Wing: Lobbying as a contest} 
\author{Alastair Langtry\footnote{University of Cambridge.  Email: \emph{atl27@cam.ac.uk.} I am very grateful to Toke Aidt, Matt Elliott, Aytek Erdil, Lukas Freund, Sanjeev Goyal, Max Langtry, Su Min Lee, Julia Shvets, and Sarah Taylor, and to participants at the Cambridge theory and political economy workshops, for helpful comments and discussion, and to Eric Snowberg and two anonymous referees for many insightful comments.
This work was supported by the Economic and Social Research Council [award reference ES/P000738/1]. Declaration of interests: none. Any remaining errors are the sole responsibility of the author.}}
\date{\today \\ \vspace{7mm}}
\begin{document}

\maketitle

\begin{abstract}
When a government makes many different policy decisions, lobbying can be viewed as a contest between the government and many different special interest groups. The government fights lobbying by interest groups with its own political capital. In this world, we find that a government wants to `sell protection' -- give favourable treatment in exchange for contributions -- to certain interest groups. It does this in order to build its own `war chest' of political capital, which improves its position in fights with other interest groups. And it does so until it wins all remaining contests with certainty. This stands in contrast to existing models that often view lobbying as driven by information or agency problems.
\emph{JEL Codes:} D23, D72, D74 
\emph{Keywords:} lobbying, contests, special interest politics.
\end{abstract}

\newpage
Governments make many different policy decisions. In doing so, they often face pressures from different special interest groups who each care about one particular policy. These groups lobby for their special interest, which is often not aligned with overall welfare. Faced with many groups lobbying for many different policies, a government may not have the resources to get its way on every policy decision. So in practice, a government may `pick its battles' -- specifically, seek to do deals with some special interest groups to avoid a fight. However, existing models of lobbying struggle to engage with this intuition because they do not allow the government to do deals to avoid fights.  

We assume the government has limited resources. This gives rise to the possibility that some special interest groups get their way. Crucially, we also allow the government to cede a policy decision to a special interest group, and for special interest groups to make transfers to the government. Our key contribution is to show that this happens, and to explain why: in equilibrium, the government will not try to fight all the special interest groups but will pick its battles. Specifically, it either engages in a contest and wins with certainty, or it does a deal where it takes a contribution in exchange for ceding the contest to the special interest group.
Note that there are three key components of a deal in our model: (1) a special interest group gives some resources, (2) which the government uses either directly or indirectly to win other contests, (3) in exchange for ceding the contest the special interest group cares about.

As a motivating example, consider a deal done between the Obama administration and the pharmaceutical lobby to secure passage of the Affordable Care Act (`Obamacare'). In it, the Obama administration dropped provisions allowing for the re-importation of prescription medicines at lower prices \citep{NYT2012}, and the pharmaceutical industry agreed to provide \$80 billion to help fund expanded insurance coverage \citep{NYT2012} and spent millions of dollars on pro-Obamacare advertising campaigns \citep{RollCall2009}.
In a complex bill like Obamacare, with many provisions each affecting different interest groups, we view each provision as its own policy. So the Obama administration ceded a contest with the pharmaceutical industry in order to win others (and secure passage of Obamacare in some form).

But to tie this properly to our intuition, it must be that: (1) the Obama administration took (or was able to control) some specific resources from the pharmaceutical lobby and (2) either used them directly to help win different legislative battles, or used them to free up other resources that could then be used to win legislative battles. Finally, (3) this must be in exchange for the Obama administration ceding some contest with the pharmaceutical lobby. While we might expect that these deals between special interest groups and the government are typically kept behind closed doors, a leaked cache of private emails from the Obama administration provide direct evidence regarding the advertising spending. 

First, they show that \emph{PhRMA} (the pharmaceutical industry association) was willing to spend \$150 million on pro-Obamacare advertising through two groups it created, called \emph{Health Economy Now} and \emph{Americans for Stable Quality Care} \citep{memo_8_June}. So in this case, the `contribution' the special interest group made was in the form of advertising spending. They also show that this spending was directed and controlled by the Obama administration. This was through meetings and emails between Obama administration officials and representatives of the pharmaceutical lobby to agree the content and placement of the adverts. One leaked email puts this particularly bluntly: \emph{``Rahm} [Emmanuel, Chief of Staff to President Barack Obama] \emph{asked for Harry and Louise ads thru third party. We've already contacted the agent''} (email from Bryant Hall, a lobbyist for \emph{PhRMA} to Jeffrey Forbes, a political strategist, dated 7 July 2009). `Harry and Louise' were a fictional couple from a series of adverts attacking then-President Bill Clinton's proposed healthcare reforms in 1993 and 1994 \citep{harry_and_louise}. These ads were seen as playing a critical role in the failure of the Clinton healthcare reforms  \citep{west1996harry, brodie2001commentary}, giving Harry and Louise a special connection to healthcare reform. The adverts aired one week after the email \citep{memo_8_June}.

Second, they suggest this spending was aimed at securing passage of Obamacare. While direct evidence in the leaked emails is lacking, it is implicit throughout that the purpose of running pro-Obamacare ads was to help pass Obamacare. The specific request for \emph{``Harry and Louise ads''} is also suggestive, given their particular connection to healthcare reform. 

Finally, the emails show that this was in exchange for controls on pharmaceutical drug prices not becoming law. An email from Democrat strategist Steve McMahon to \emph{PhRMA} lobbyist Bryant Hall (dated 3 June 2009, \cite{memo_8_June}) acknowledged, \emph{``I told them there were bright lines, but PhRMA was serious about reform. We agreed that we would talk on a regular basis about what they wanted and what we might be able to do. I also told them if the lines were violated the money would go elsewhere and `that wouldn't be good'.''} This understanding of a quid pro quo is critical to our model. 

Note that our motivating example looks at a case where the government uses resources from a deal \emph{directly} to fight other interest groups. It is easiest to see the mechanism in action in this direct case. But the model also allows \emph{indirect} use of the resources -- the government can use them to free up political capital from something unrelated, and use that other political capital in contests. 

We also generate precise predictions of which groups a government cedes contests to. If the special interest groups' preferred policies are not too harmful, then the government concedes all contests. It prefers to acquire as much political capital as possible by doing deals, as we assume that having political capital brings some benefit to the government. On the other hand, if special interest groups' preferred policies are sufficiently harmful, then the government cedes exactly the number of contests needed to allow it to win all remaining contests with certainty. And it cedes to interest groups whose preferred policy outcome is least harmful to society. In stark contrast, if the government cannot do deals, it adopts a `scattergun' approach and spreads its resources across all possible contests, in proportion to the importance of each contest. 

The model explains why even governments that appear weak may not lose many legislative fights. It is because a weak government chooses not to engage in many fights. This highlights how the success a government has depends on the ambition of its agenda -- i.e. how many fights it picks. Further, it shows that the ability to make deals dramatically reduces the amount of lobbying special interest groups do directly. However, special interest groups make contributions to the government -- which increases the amount of political capital the government spends on contests. Unsurprisingly, a government that starts with less political capital takes more contributions from special interest groups. But it also spends less on contests. \\

\noindent \textbf{Related Literature.} 
This paper relates to two strands of literature; lobbying and conflict. Our key contribution is to allow the government to make deals with special interest groups; ceding on some policies in exchange for resources that can then be used to help secure other policies the government wants. 
This adds to the lobbying literature by providing a stylised way of capturing the back-room deals that governments might do with special interest groups. 

Most of the existing literature on lobbying uses either information or agency issues as the core friction (see \cite{bombardini2020empirical} for a recent review). In the former case, largely following \cite{crawford1982strategic}, special interest groups have better information than the government and affect what the government wants to do through strategic communication. See for example, \cite{esteban2000wealth, bennedsen2002lobbying} and \cite{roberti2014lobbying}. 

The latter, largely following \cite{grossman1994protection}, models agency issues as the government caring both about social welfare and about direct payoffs to itself. Special interest groups influence what the government wants to do by offering money. Included in this second strand is work that models interactions between different interest groups as a conflict (see for example \cite{esteban1999conflict} and \cite{kang2016policy}). As in our paper, contest success functions are used to model the conflict, but the government is not a party in the contest.

A notable exception is \cite{franke2015conflict}. While theirs is a more general study of conflict on a network (including our set-up where one central agent engages in contests with many peripheral agents), it can cover an application to lobbying. However, they have no notion of conceding contests. In the conflict literature, the ability to cede contests in exchange for a transfer of resources represents a novel addition to Colonel Blotto games (an important class of games in the conflict literature).\footnote{Colonel Blotto games were first introduced to economics by \cite{borel1921theorie, borel1953theory} and \cite{blackett1954some, blackett1958pure}. They typically involve two players spending resources on a number of bilateral contests \citep{roberson2006colonel, hart2008discrete, kovenock2012battlefields}, although n-player variants have been considered more recently \citep{boix2021multiplayer}. There is also a large literature that considers negotiations prior to contests, but not using the Colonel Blotto framework. See for example \cite{sanchez2009conflict, sanchez2012bargaining}, \cite{herbst2017balance} and \cite{ghosh2019intimidation}.} 
We find this has a dramatic impact on equilibrium outcomes. The government moves from spreading its resources across all contests to fighting only where it will win for certain. This suggests the ability to cede contests has potentially important implications for battles on multiple fronts beyond this application to lobbying.

\section{Model}\label{sec:model}
There is one risk-neutral \emph{government}, $G$, endowed with $K^G > 0$ units of \emph{political capital}. There is a measure one of \emph{laws}, $i \in I = [0,1]$, and one \emph{interest group} per law, also indexed $i$. The game has two stages. 

\paragraph{Stage 1: Bargaining.} The government simultaneously bargains with each interest group $i$ over the respective law. If a \emph{deal} is reached, then interest group $i$ makes a \emph{contribution} $B_i > 0$ (which is a transfer of resources to the government), and the government concedes law $i$, losing it with certainty.

We assume a deal is done if and only if there exists a deal (and hence some $B_i > 0$) that both agents prefer over no deal. The size of the contribution is determined by a bargaining function $\Gamma_i: \mathbb{R}^2 \to \mathbb{R}$, where $0$ represents `no deal'. Specifically, $\Gamma_i(b_i^{min}, b_i^{max}) = f(b_i^{min}, b_i^{max}) \cdot \mathbf{1}\{ b_i^{min} \leq b_i^{max} \}$, with $f(b_i^{min}, b_i^{max}) \in [b_i^{min}, b_i^{max}]$, and where $b_i^{min} \geq 0$ and $b_i^{max} \geq 0$ are the (endogenously determined) contributions that would leave the government and interest group, respectively, indifferent between a deal and no deal. Denote the set of interest groups that do a deal $\mathcal{B} = \{i : B_i \neq 0\}$.\footnote{It is also possible to replace this cooperative bargaining with a non-cooperative framework, but doing so adds more notation without affecting the results.}
These contributions add to the government's stock of political capital, so it has total political capital $K^T = K^G + \int_i B_i \d i$.

\paragraph{Stage 2: Contests.} First, the government chooses an amount of political capital spending, $K_i \geq 0$, for each contest $i$, subject to $\int_i K_i \d i \leq K^T$. Second, each interest group $i$ simultaneously chooses a quantity of lobbying $L_i \geq 0$. The government moves before the interest groups. So the government's strategy in the contest stage is $K: \mathbb{R}^I \to \mathbb{R}^I$, and each interest group's strategy is $L_i: \mathbb{R}^I \times \mathbb{R}^I \to \mathbb{R}$.

\paragraph{Outcome of laws.} The outcome of each law is binary: either the government wins (``\emph{good}''), or it loses (``\emph{bad}''). If the government does a deal over law $i$, it concedes law $i$, and so wins with probability $0$. Otherwise, the probability the government wins is determined by a standard Tullock contest function \citep{tullock1980, garfinkel2012oxford}.\footnote{\cite{skaperdas2012persuasion} provide micro-foundations showing it is an outcome of a game where contestants present evidence to persuade an audience. The Online Appendix provides similar micro-foundations directly in our setting.} 
That is:
\begin{align}\label{eqn:contest fn}
    p_i =
    \begin{cases}
    0 \ &\text{ if } i \in \mathcal{B} \\
    \frac{K_i}{K_i + L_i} \ &\text{ if } i \notin \mathcal{B} \text{ and } K_i + L_i > 0 \\
    0.5  \ &\text{ otherwise. }
    \end{cases}
\end{align}

\paragraph{Preferences.} The value of winning a law $i$ to the interest group is $\pi_i \geq 0$, and the value of winning to the government is $\alpha \pi_i$. We assume $\pi_i$ is finite for all $i$ and $\alpha > 1$. For clarity, we say $\pi_i$ is the \emph{importance}, and $\alpha$ is the \emph{harmfulness}, of a law. Interest group $i$ cares only about its own expected payoff from the outcome of law $i$, less the cost of lobbying it does and contribution it makes (if any). Therefore it has preferences:
\begin{align}\label{eq:firm prefs}
    u_i = \underbrace{(1 - p_i) \pi_i}_{\text{gains from winning}} - \quad \underbrace{(L_i + B_i)}_{\text{costs}}.
\end{align}

The government cares about three things. First, it pays a cost $\alpha \pi_i$ when it loses law $i$. Second, any unused political capital has a value: if it is not spent on contests, it can be spent on some other government project (outside the model). Third, it cares about the lobbying done by interest groups. Putting these together yields preferences:
\begin{align}\label{eq:govt prefs}
    u_G = \underbrace{- \int_i \alpha \pi_i (1 - p_i) \d i}_{\text{losses from losing contests}} 
    + \underbrace{\left( K^T - \int_i K_i \d i \right)}_{\text{unspent political capital}}
    - \underbrace{ \int_i L_i \d i.}_{\text{lost surplus from lobbying}}
\end{align}

There are two reasons for caring about lobbying by interest groups. One, lobbying consumes real resources that could have been spent on something else, including the time of those the interest group is seeking to influence.\footnote{From the viewpoint of social welfare, all resources spent contesting laws are wasted \citep{krueger1974political, tullock1980, aidt2016rent, congleton2018oxford}. To see this here, notice that the contest function is homogeneous of degree zero, so any $p_i$ can be achieved by spending an arbitrarily small quantity of resources.} 
This takes the view that lobbying is a socially damaging activity, even when interest groups are not successful in getting their way.
Two, it is important for tractability.\footnote{From a technical standpoint, it helps pin down the contributions paid in the first stage by making the minimum contribution the government is willing to accept linear in the quantity of political capital it has, and exactly equal to $\pi_i$ in the case where the government has enough political capital to win all contests for sure. Absent this feature, it would be difficult to pin down the outcome of the bargaining stage without more structure on the bargaining process.}

\paragraph{Solving the game.} We will look for Pure Strategy Subgame Perfect Nash Equilibria of this game. For convenience, we call them `equilibria' throughout. We will solve by backwards induction.

\subsection{Discussion} 
Before going further, it is helpful to discuss what contests, and the resources devoted to them, look like in our setting. A contest between a special interest group and the government is a battle for the votes of the members of a country's legislature. Outcomes of laws are binary. So think of this as legislators voting for or against a bill. An agent wins the contest when a majority of legislators vote their way.
To avoid confusion, we call these outcomes `good' and `bad'. The `good' outcome is in the general interest, and `bad' benefits a special interest group. For example, a law could either impose protectionist tariffs on imported steel (`bad') or not (`good').

Political capital [resp. lobbying] is anything that can be used to increase the probability legislators vote for the government [resp. interest group]. To fix ideas, we could think of political capital as the government's ability to provide ministerial jobs/cabinet positions, seats on committees, or visits to the legislator's district by senior officials. Similarly, lobbying could take the form of well-paid second jobs for legislators, relocation of factories to their districts, or campaigns to build support among local voters. Note that in our model, the government and legislators are distinct agents.

The model assumes that if the government does a deal, then the special interest group wins without spending anything. This is particularly intuitive if the interest group is defending the status quo: non-participation by the government involves it not suggesting a change to the law at all. The interest group does not need do any lobbying to convince the legislature here, because there is nothing for the legislature to vote on.

The contributions that special interest groups make when they do a deal with the government take the same form as lobbying. The difference is that the government chooses how and where to spend these resources. The government \emph{controls} these resources, and this is what matters.

\section{Results}\label{sec:results}
\subsection{Second stage}
In the second stage of the game, the contributions, $B_i$, are fixed. Therefore, the set of laws available for contest, $\mathcal{B}^c$, and the government's total political capital, $K^T$, are also fixed. For a law where a deal has already been made, the government loses for certain, and so neither agent spends resources on the contest in question.
For a law that is available for contest, the special interest group and government spend up to the point where the marginal expected benefit of doing so equals the marginal cost. 

Because the government moves before the special interest group, it is able to anticipate how its choice of political capital, $K_i$, affects lobbying, $L_i$.\footnote{This means that the effect $K_i$ has on $L_i$ is considered as part of the benefit of spending political capital. Due to the shape of the contest function, the special interest group $i$'s best response function is non-monotonic in $K_i$. This is a standard implication of the Tullock contest function (see, for example, \citet[Ch. 2]{konrad2009strategy}).} 
As the government's marginal cost of spending political capital is the same for all contests, it must therefore equate marginal benefits across contests. 
Note that the only benefit of spending political capital is to win contests. So once the government spends enough to guarantee victory, there is no reason to spend beyond that point (as spending is always costly).

Together, these observations pin down behaviour in the second stage. One special case where we might be interested in the second stage by itself is a world where deals are not permitted at all. There, the government adopts a `scattergun' approach, spreading itself across all contests in proportion to the costs of bad laws. 

\begin{prop}\label{prop:eq wo bribes}
Fix $B_i = 0$ for all $i$. If $K^T < \int_i \pi_i \d i$, there is a unique equilibrium:
\begin{align*}
K_i^* = \pi_i Z \quad , \quad L_i^* = \pi_i ( Z^{0.5} - Z)  \quad \text{ for all } i, \text{ where } Z \equiv \frac{K^T}{\int_i \pi_i \d i}. 
\end{align*}
Otherwise, there is a unique equilibrium: $K_i^* = \pi_i$ , $L_i^* = 0$ for all $i$.\footnote{Notice that because we have fixed $B_i = 0$ for all $i$ here, $K^T = K^G$ in this special case. But in general, $K^T \geq K^G$ because the government's total political capital is its initial endowment, plus what it receives in contributions.}
\end{prop}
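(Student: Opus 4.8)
The plan is backward induction within the second stage. Since $B_i = 0$ for every $i$, no deals are in place ($\mathcal{B} = \emptyset$) and every law is contested. I would first solve each interest group's subgame, then substitute its best response into the government's problem.

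\textbf{Step 1 (interest groups).} Group $i$'s payoff $u_i = \frac{L_i}{K_i + L_i}\pi_i - L_i$ depends only on $K_i$ and $L_i$, so each group best responds to its own $K_i$ and no fixed point across groups is needed. For $K_i > 0$ the map $L_i \mapsto u_i$ is strictly concave; the first-order condition gives $(K_i + L_i)^2 = K_i \pi_i$, and comparing this interior candidate with the corner $L_i = 0$ yields $L_i^{\mathrm{BR}}(K_i) = \max\{\sqrt{K_i \pi_i} - K_i,\, 0\}$, which is strictly positive exactly on $0 < K_i < \pi_i$ and zero for $K_i \ge \pi_i$. I would also note that $K_i = 0$ together with $\pi_i > 0$ admits no best response (the payoff rises as $L_i \downarrow 0$ with no maximiser), so any equilibrium has $K_i > 0$ wherever $\pi_i > 0$ — which also removes the $p_i = 0.5$ branch of the contest function from consideration.

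\textbf{Step 2 (reduce the government's problem).} Substituting $L_i^{\mathrm{BR}}$, on $0 < K_i \le \pi_i$ the induced objects are $p_i = \sqrt{K_i/\pi_i}$ and $L_i = \sqrt{\pi_i K_i} - K_i$. The key algebraic point is that in $u_G$ the spending term $-\int K_i\,\d i$ cancels against the $+\int K_i\,\d i$ hidden inside $-\int L_i\,\d i$, so $u_G = K^T - \alpha\int \pi_i\,\d i + (\alpha - 1)\int \sqrt{\pi_i K_i}\,\d i$. For $K_i > \pi_i$ one has $L_i = 0$, $p_i = 1$, and law $i$ contributes only $-K_i$ to $u_G$, which is strictly dominated by choosing $K_i = \pi_i$; hence it is without loss to require $K_i \le \pi_i$. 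Since $\alpha > 1$, the government's problem collapses to $\max \int \sqrt{\pi_i K_i}\,\d i$ subject to $0 \le K_i \le \pi_i$ and $\int K_i\,\d i \le K^T$ — a concave program with a separable, increasing objective; equivalently, the per-law payoff $\psi(K_i)$ is strictly increasing up to $\pi_i$ and strictly decreasing after.

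\textbf{Step 3 (solve it), and the main obstacle.} If $K^T \ge \int \pi_i\,\d i$, the pointwise maximiser $K_i = \pi_i$ is feasible, hence the unique optimum (by strict single-peakedness of $\psi$ at $\pi_i$), giving $L_i^* = 0$. If $K^T < \int \pi_i\,\d i$, the objective is increasing in each $K_i$ so the budget binds, and Cauchy–Schwarz gives $\int \sqrt{\pi_i K_i}\,\d i \le \big(\int \pi_i\,\d i\big)^{1/2}\big(\int K_i\,\d i\big)^{1/2} \le \big(K^T \int \pi_i\,\d i\big)^{1/2}$, with equality iff $K_i$ is proportional to $\pi_i$ and the budget binds, i.e. $K_i = Z\pi_i$ with $Z = K^T / \int \pi_i\,\d i$; since $Z < 1$ this satisfies $K_i \le \pi_i$, confirming feasibility, and then $L_i^* = \sqrt{\pi_i \cdot Z\pi_i} - Z\pi_i = \pi_i(Z^{0.5} - Z)$. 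The calculations are routine; the delicate parts are the continuum bookkeeping: ruling out $K_i = 0$ via non-existence of a best response, ruling out $K_i > \pi_i$ by the dominance argument, extracting uniqueness from the equality case of Cauchy–Schwarz rather than from first-order conditions, and checking that the harmless degenerate laws with $\pi_i = 0$ (where all quantities are zero) are consistent with the stated formulas. The one genuinely load-bearing step is the cancellation in Step 2.
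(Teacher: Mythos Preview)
Your argument is correct and reaches the same conclusion, but the route differs from the paper's. The paper sets up the government's Lagrangian (first with finitely many groups, then passing to the limit), takes first-order conditions, and pins down the allocation by equating $\frac{d\mathcal{L}}{dK_i}$ across contests, which yields $K_i^*/K_j^* = \pi_i/\pi_j$; the budget constraint then fixes the constant of proportionality. You instead exploit the algebraic cancellation to reduce the problem to maximising $\int \sqrt{\pi_i K_i}\,\d i$ under a linear budget, and then invoke Cauchy--Schwarz both to identify the optimum and to extract uniqueness from its equality case. Your approach is tidier on uniqueness and sidesteps the Lagrange multiplier bookkeeping; the paper's Lagrangian route is more mechanical but has the advantage that the multiplier $\mu$ it computes is reused later (in Lemma~2) to characterise $b_i^{min}$. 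You are also more careful than the paper about two boundary issues --- the non-existence of a best response at $K_i = 0$ and the degenerate $\pi_i = 0$ laws --- which the paper leaves implicit.
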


Due to the scattergun approach, the government will only win a contest with certainty if it wins \emph{all} contests with certainty. That only happens when the government has high enough quantities of political capital and is able to completely deter all lobbying.

\subsection{First stage}\label{sec:endogenous bribes}
In the first stage, the government and interest groups can do deals. What deals parties find acceptable depends on what would happen if a deal were not reached and they had to contest the law in the second stage. In other words, both the government and the special interest group have an outside option of not doing a deal and fighting. They will only do a deal that is better than their outside option.

When the government does a deal with interest group $i$, it takes a contribution which increases its stock of political capital. It also shrinks the set of laws that are available for contest, which both frees up political capital and allows it to focus its spending on a smaller set of contests. All three of these forces improve the government's position in any other contests that happen. This means its outside option improves when negotiating deals with other interest groups -- so the minimum contribution it finds acceptable rises. But this improvement in the government's position also worsens the other interest groups' outside options. They know that if they fail to reach a deal, they will do worse in the subsequent contest. This increases the maximum contribution they find acceptable.

It turns out that whenever the government does not yet have enough political capital to win all remaining contests with certainty, the minimum contribution that is acceptable to the government is less than the maximum contribution that is acceptable to the interest group. So there is always room for a deal to be done. This means deals happen. In turn, this improves the government's outside option, and worsens interest groups' outside options, in other negotiations.

So the government will do deals at least to the point where it has enough political capital (including contributions) to win all remaining contests with certainty. Past this point, additional deals still accrue additional political capital -- both through contributions and saving on spending in the contests. But there is no longer any benefit to spending it in other contests, as there are all already won with certainty.

Whether the government wants to continue to do deals past this point depends critically on the harmfulness, $\alpha$, of bad laws. The benefit of doing a deal past this point is $B_i + \pi_i$ (the contribution it receives plus the political capital it frees up),\footnote{Note that in the region where the government has enough political capital to win all (remaining) contests with certainty, it spends $K_i = \pi_i$ on contest $i$. Hence doing a deal frees up $\pi_i$ units of political capital.} and the cost is $\alpha \pi_i$. So the government is willing to do a deal when $B_i \geq (\alpha - 1) \pi_i$. And the interest group is willing to do a deal whenever $B_i \leq \pi_i$, because it will lose the contest for sure if a deal is not reached.

Clearly when $\alpha \leq 2$, there is always a mutually agreeable deal to be done. So the government does a deal with every interest group. Notice that without pinning down the negotiation process, we cannot say how large the contributions will be in this case. All we can say is that they will be somewhere between the minimum amount the government is willing to accept and the maximum amount the interest group is willing to offer.\footnote{One way to pin them down would be to assume the government makes take-it-or-leave-it offers to special interest groups (who themselves can only accept or reject the offer). Then the bribes are pinned down to exactly $\pi_i$ -- the maximum the interest groups are willing to pay.}

In contrast, when $\alpha > 2$ there are no further mutually acceptable deals once the government has enough political capital (including contributions) to win all remaining contests with certainty. So the government does deals up to this point and no further. In this case, the size of the contribution is pinned down. Special interest groups pay a contribution equal to the full importance, $\pi_i$, of their law. They are willing to pay this because failing to reach a deal results in them losing the subsequent contest for certain. And this is the minimum amount the government is willing to accept.

\begin{prop}\label{prop:eq with bribes}
\noindent If $\alpha \in (1,2]$, in all equilibria: $B_i \in [(\alpha - 1) \pi_i , \pi_i]$, $K_i = 0$, $L_i = 0$ for all $i \in I$. 

\noindent If $\alpha > 2$ and $K^G \geq \int_i \pi_i \d i$, there is a unique equilibrium: $B_i = 0$, $K_i = \pi_i$, $L_i = 0$ for all $i \in I$.

\noindent If $\alpha > 2$ and $K^G < \int_i \pi_i \d i$, then in all equilibria: 

    (1) $B_i = \pi_i$, $K_i = 0$, $L_i = 0$ for all $i \in \B$ 
    
    (2) $B_i = 0$, $K_i = \pi_i$, $L_i = 0$ for all $i \in \B^c$
    
    (3) $\B$ is such that $\int_{j \in \B^c} \pi_j \d j = K^G + \int_{j \in \B} \pi_j \d j$.
\end{prop}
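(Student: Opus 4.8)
The plan is to solve the game by backward induction: first reduce the second stage to Proposition~\ref{prop:eq wo bribes} for an arbitrary bargaining outcome, then characterise the bargaining, and finally assemble the three cases.

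\emph{Second stage.} The key observation is that, fixing any set $\mathcal B$ of conceded laws and any contribution profile $\{B_i\}$, the continuation game is exactly the game of Proposition~\ref{prop:eq wo bribes} played on the contestable set $\mathcal B^c$ with endowment $K^T=K^G+\int_i B_i\,\d i$. So, writing $P^c\equiv\int_{\mathcal B^c}\pi_i\,\d i$ and $Z\equiv K^T/P^c$, Proposition~\ref{prop:eq wo bribes} (applied with $I$ replaced by $\mathcal B^c$ and $K^G$ by $K^T$) gives the continuation: either $K^T<P^c$ and the scattergun outcome holds on $\mathcal B^c$ ($K_i=\pi_i Z$, $L_i=\pi_i(Z^{1/2}-Z)$, $p_i=Z^{1/2}$), or $K^T\ge P^c$ and $K_i=\pi_i$, $L_i=0$, $p_i=1$ on $\mathcal B^c$; on $\mathcal B$ nothing is spent and the law is lost for sure. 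Substituting into \eqref{eq:firm prefs}--\eqref{eq:govt prefs} I would record the closed-form continuation payoffs: each contestable $i$ gets $\pi_i(1-Z^{1/2})^2$ (scattergun) or $0$ (win-all), and the government gets $u_G=K^T-\alpha\!\int_i\pi_i\,\d i+(\alpha-1)\sqrt{K^T P^c}$ (scattergun) or $u_G=K^T-\alpha\!\int_i\pi_i\,\d i+(\alpha-1)P^c$ (win-all), expressions that coincide at $Z=1$.

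\emph{Bargaining thresholds.} For each $i$ I would compute the contributions at which the two parties are indifferent to a deal. Since $\pi_i-b_i^{max}$ must equal $i$'s no-deal continuation, $b_i^{max}=\pi_i$ in the win-all regime and $b_i^{max}=\pi_i Z^{1/2}(2-Z^{1/2})$ in the scattergun regime. Because a single interest group is a null set, I would pin down $b_i^{min}$ by differentiating $u_G$, viewed as a function of the aggregates $(K^T,P^c)$, in the direction in which a marginal mass of law $i$ is conceded ($K^T$ rising at rate $B_i$, $P^c$ falling at rate $\pi_i$): this yields $b_i^{min}=(\alpha-1)\pi_i$ in the win-all regime and $b_i^{min}=\dfrac{(\alpha-1)Z}{2Z^{1/2}+\alpha-1}\,\pi_i$ in the scattergun regime. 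The one real calculation is that $b_i^{min}<b_i^{max}$ whenever $K^T<P^c$: writing $s=Z^{1/2}\in(0,1)$ and clearing the positive denominator, this reduces to $\alpha(s-1)+s^2-3s+1<0$, which holds since $\alpha>1$ forces the left-hand side to be at most $s^2-2s=s(s-2)<0$. Hence whenever the government is capital-constrained a mutually acceptable deal exists for every contestable law of positive importance, so in any equilibrium $K^T\ge P^c$.

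\emph{Case analysis and assembly.} With $K^T\ge P^c$ in hand, split on $\alpha$ and on $K^G$ versus $\int_i\pi_i$. If $\alpha\in(1,2]$: then $b_i^{min}=(\alpha-1)\pi_i\le\pi_i=b_i^{max}$ for every $i$ with $\pi_i>0$, so all such laws are conceded, $P^c=0$, no contest is run, $K_i=L_i=0$, and $B_i=\Gamma_i((\alpha-1)\pi_i,\pi_i)\in[(\alpha-1)\pi_i,\pi_i]$. If $\alpha>2$ and $K^G\ge\int_i\pi_i$: then $K^T\ge K^G\ge\int_i\pi_i\ge P^c$ for every $\mathcal B$, so the win-all regime always holds, where $b_i^{min}=(\alpha-1)\pi_i>\pi_i=b_i^{max}$ whenever $\pi_i>0$; no deal is ever acceptable, $\mathcal B$ is null, and the only outcome is $B_i=0$, $K_i=\pi_i$, $L_i=0$. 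If $\alpha>2$ and $K^G<\int_i\pi_i$: the subcase $K^T>P^c$ is excluded because in the strict win-all regime no deal is acceptable, forcing $\mathcal B$ null and hence $K^T=K^G<\int_i\pi_i=P^c$, a contradiction; so $K^T=P^c$, i.e.\ $Z=1$, giving $K_i=\pi_i,L_i=0$ on $\mathcal B^c$ and $K_i=L_i=0$ on $\mathcal B$. The budget identity $K^G+\int_{\mathcal B}B_i\,\d i=K^T=P^c=\int_{\mathcal B^c}\pi_i\,\d i$ together with $B_i=\pi_i$ on $\mathcal B$ (see below) is exactly condition~(3); and such a $\mathcal B$ exists because $\mathcal B\mapsto\int_{\mathcal B}\pi_i\,\d i$ sweeps out $[0,\int_i\pi_i]$ by nonatomicity of $[0,1]$, so one can take $\int_{\mathcal B}\pi_i\,\d i=\tfrac12\big(\int_i\pi_i-K^G\big)\ge0$, after which absence of profitable deviations in either stage is a routine check using Proposition~\ref{prop:eq wo bribes}.

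\emph{Main obstacle.} The delicate step — the one I expect to be hardest — is showing that at the knife-edge $Z=1$ the bargaining set collapses to the single point $\{\pi_i\}$, so that $B_i=\pi_i$ for every $i\in\mathcal B$ regardless of the selection $\Gamma_i$. The upper end is immediate ($b_i^{max}=\pi_i$, since absent a deal group $i$ loses for sure), but the lower end requires $b_i^{min}=\pi_i$: because the government is exactly at capacity, re-opening contest $i$ forces it to divert $\pi_i$ of political capital that it can only find by giving up certain victory elsewhere, so the smallest contribution it will accept in lieu of that is precisely $\pi_i$. Making this boundary argument rigorous — rather than the naive ``group $i$ is measure zero, so the government is indifferent for every $B_i$'' — is the crux, and it is exactly what rules out the otherwise-consistent profiles with $Z=1$ but $B_i<\pi_i$ on a set of positive importance, which would violate~(3).
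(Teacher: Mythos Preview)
Your proposal follows essentially the same architecture as the paper: solve the contest stage via Proposition~\ref{prop:eq wo bribes} for arbitrary $(\mathcal B, K^T)$, compute the reservation contributions $b_i^{\min},b_i^{\max}$, show $b_i^{\min}<b_i^{\max}$ whenever $Z<1$ (so $Z\ge 1$ in any equilibrium), and then split into the three cases. Your case analysis and the budget identity yielding condition~(3) match the paper's Steps~1--3 almost line for line.

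The one substantive difference is how $b_i^{\min}$ is computed in the constrained regime. You differentiate the closed-form $u_G(K^T,P^c)$ in the direction of a marginal deal and obtain $b_i^{\min}=\dfrac{(\alpha-1)Z}{2Z^{1/2}+\alpha-1}\,\pi_i$. The paper instead does a Lagrange-multiplier accounting (its Lemma~2): valuing the contribution $B_i$ at the shadow price $\mu$ of the budget constraint, it arrives at the simpler expression $b_i^{\min}=\pi_i Z$. Both formulas satisfy $b_i^{\min}<b_i^{\max}$ for $Z<1$, so the ``$Z\ge 1$ in equilibrium'' step goes through either way, and both give $b_i^{\min}=(\alpha-1)\pi_i$ in the unconstrained regime.

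Where the two routes diverge is precisely your ``main obstacle.'' The paper's formula gives $b_i^{\min}=\pi_i Z\to\pi_i=b_i^{\max}$ as $Z\to1^-$, so at the knife-edge the bargaining interval collapses to the singleton $\{\pi_i\}$ and $B_i=\pi_i$ on $\mathcal B$ follows directly from Lemma~2 with no separate boundary argument. Your envelope formula does \emph{not} collapse at $Z=1$ (it yields $\tfrac{\alpha-1}{\alpha+1}\pi_i<\pi_i$ from the scattergun side), which is why you are forced into the additional heuristic about diverting $\pi_i$ of capital. In short, the paper sidesteps the difficulty you flag via its particular shadow-value bookkeeping --- effectively valuing $B_i$ at $\mu$ rather than the $1+\mu$ implicit in your envelope calculation --- and does not otherwise dwell on the measure-zero issue you correctly identify as delicate.
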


This result tells us several things. First, the government will do enough deals to ensure that it wins all contests it does engage in with certainty. Second, when bad laws are not sufficiently harmful (i.e. $\alpha \leq 2$) the government does deals with every single interest group. It prefers to build up a large `war chest' of political capital, but to use that political capital on something outside of the model. Third, when bad laws are sufficiently harmful (i.e. $\alpha > 2$) the government stops doing deals as soon as it has sufficient political capital to win all remaining laws with certainty. This also forces special interest groups to make large contributions.

When $\alpha < 2$, we know that the government does deals with every special interest group, but we do not know the size of the contributions. Nevertheless, we do know the minimum equilibrium contributions are increasing in the harmfulness of bad laws -- the price of protection must be high enough to compensate the government for the costs of bad laws.

In contrast, when $\alpha > 2$ we know how much protection is for sale and at exactly what price. But we cannot say who buys it. The result pins down the mass of interest groups (weighted by the importance, $\pi_i$, of their laws) that do deals, but not their identities. This is because the government does not care about the identity of interest groups \emph{per se} -- only about the costs and benefits of cutting a deal. The fact that the harmfulness of laws, $\alpha$, is constant means that the government is indifferent between doing deals with any interest group (obviously conditional on the contribution being $\pi_i$). The importance of the law (i.e. the size of $\pi_i$) does not matter. Larger $\pi_i$ increases the cost of doing a deal. But it also increases the benefit -- as the interest group is willing to make a larger contribution. These two forces perfectly offset one another, leaving the government indifferent.\footnote{That these two forces perfectly offset -- resulting in the `importance' of individual laws playing no role in the government's choice of which deals to make -- relies on our assumption that the government cares about the lobbying done by interest groups. This is in part because caring about lobbying by interest groups in this way gives the government an extra reason to do deals and avoid contests.}

An obvious, and yet important, implication of this result is that the government accepts weakly more bribes, and the total resources spend on contests is weakly lower, when it has a smaller initial endowment of political capital.

\begin{cor}
When the government has a smaller initial endowment of political capital, $K^G$, it (i) accepts more bribes and (ii) spends less on contests.
\end{cor}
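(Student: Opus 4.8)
The plan is to derive both comparative statics directly from Proposition~\ref{prop:eq with bribes}, reading ``accepts more bribes'' as the $\pi$-weighted mass of interest groups doing a deal, $\int_{j \in \B} \pi_j \,\d j$ (equivalently $\int_i B_i \,\d i$ wherever the contributions are pinned down), and ``spends less on contests'' as total contest spending $\int_i K_i \,\d i$. Concretely, I would fix two endowments $K^G_1 < K^G_2$, write down these two quantities in each of the three regimes of the proposition, and check that under $K^G_1$ the first is weakly larger and the second weakly smaller.

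When $\alpha \le 2$, Proposition~\ref{prop:eq with bribes} gives $\B = I$ and $K_i = 0$ for every $i$ in every equilibrium, regardless of $K^G$: the $\pi$-weighted mass of deals is $\int_i \pi_i \,\d i$ and contest spending is $0$ for both endowments, so (i) and (ii) hold with equality. (In this regime the comparison must be phrased through the set $\B$ rather than in ``dollar'' terms, since $B_i$ is only identified up to $[(\alpha-1)\pi_i,\pi_i]$.) When $\alpha > 2$ and $K^G \ge \int_i \pi_i\,\d i$, no deals are done and $\int_i K_i \,\d i = \int_i \pi_i\,\d i$. When $\alpha > 2$ and $K^G < \int_i \pi_i\,\d i$, combining the balance condition (3), $\int_{j \in \B^c} \pi_j\,\d j = K^G + \int_{j\in\B}\pi_j\,\d j$, with $\int_{j\in\B}\pi_j\,\d j + \int_{j\in\B^c}\pi_j\,\d j = \int_i\pi_i\,\d i$ gives
\begin{align*}
\int_{j\in\B}\pi_j\,\d j \;=\; \frac{1}{2}\left(\int_i \pi_i\,\d i - K^G\right), \qquad \int_i K_i \,\d i \;=\; \int_{j\in\B^c}\pi_j\,\d j \;=\; \frac{1}{2}\left(\int_i \pi_i\,\d i + K^G\right).
\end{align*}
The first expression is strictly decreasing in $K^G$ on $\big(0,\int_i\pi_i\,\d i\big)$ and equals $0$ above it; the second is strictly increasing on that interval and equals $\int_i\pi_i\,\d i$ above it; and the two branches agree at $K^G = \int_i\pi_i\,\d i$. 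Hence for any $K^G_1 < K^G_2$ the $\pi$-weighted mass of deals is weakly larger and contest spending weakly smaller under $K^G_1$, which is the claim.

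There is no substantive obstacle: the argument is case analysis layered on top of Proposition~\ref{prop:eq with bribes}. The only points needing care are (a) choosing a notion of ``more bribes'' that is well-posed given that $B_i$ is only partially identified when $\alpha\le 2$ and that, when $\alpha>2$, Proposition~\ref{prop:eq with bribes} pins down only the $\pi$-weighted measure of $\B$ and not its identity; and (b) verifying the comparison survives when the pair $(K^G_1,K^G_2)$ straddles the threshold $\int_i\pi_i\,\d i$, which it does by the continuity of both quantities there. The threshold $\alpha=2$ is irrelevant since it does not move with $K^G$.
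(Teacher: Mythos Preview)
Your proposal is correct and follows essentially the same route as the paper: both derive the key identity $\int_{j\in\B}\pi_j\,\d j = \tfrac{1}{2}\bigl(\int_i\pi_i\,\d i - K^G\bigr)$ from the balance condition in Proposition~\ref{prop:eq with bribes} and read off the comparative statics. If anything, you are more careful than the paper, which simply computes $\frac{d}{dK^G}\int_{j\in\B}B_j\,\d j = -\tfrac{1}{2}$ in the interior $\alpha>2$ case and asserts that both parts follow, without explicitly treating the $\alpha\le 2$ regime or the threshold $K^G = \int_i\pi_i\,\d i$.
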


The total contributions accepted are decreasing in $K^G$ because a larger initial endowment of political capital means the government does not need to do as many deals before it has enough to win all remaining contests with certainty. Fewer deals then means the government engages in more contests, and  spends more resources on them overall. And although the government fights more, it is better off when it has higher $K^G$. This is because it is better for the government to fight and win for certain than to concede and take a contribution of $\pi_i$.

Note that while \Cref{prop:eq with bribes} involves no lobbying in equilibrium, this prediction is not robust. It relies on the simplifying assumption that when there is a deal the interest group wins for certain without any lobbying. Assuming instead that an interest group needs to do some lobbying to get its preferred law passed, even if the government does not participate, would generate lobbying in equilibrium. Intuitively, this might be to ensure swift passage of the legislation (this is discussed briefly in the Online Appendix). 
But even without this extension, the model still predicts interest group money in politics -- coming from the contributions they make.

\subsection{Extension: Heterogeneous Harms}\label{sec:heterogeneous harms}
Relaxing the assumption that all laws are equally harmful pins down the identities of the special interest groups that do deals. The outcome is that the government does deals with interest groups whose laws are less harmful. It still wants to take contributions up to the point where it will completely deter lobbying in all remaining contests, and so will still extract the full value of $\pi_i$ as a contribution. However, it now plays a cut-off strategy. It will make a deal with \emph{every} interest group with an $\alpha_i$ below a threshold, and from \emph{no} special interest groups above the threshold.

To show this, we need to put some formal restrictions on the $\alpha_i$. Assume that $\alpha_i \in (2,\infty)$ for all $i$ and that they can be represented by some atomless Cumulative Distribution Function. Note that we are assuming $\alpha_i > 2$, rather than $\alpha > 1$ as in the main model. This is in the interest of tractability. In Proposition 1, we were not able to pin down the exact contributions when $\alpha \in (1,2]$. It is helpful to avoid that issue here. With this, we can now pin down who makes contributions.

\begin{prop}\label{prop:eq with het harms}
Suppose $\alpha_i > 2$ for all $i$, and are distributed according to some atomless CDF. There is a unique equilibrium: for some $\overline{\alpha}$

(1) $B_i^* = \pi_i$, $K_i^* = 0$ \ and $L_i^* = 0$ if $\alpha_i < \overline{\alpha}$, and

(2) $B_i^* = 0$, \ $K_i^* = \pi_i$ and $L_i^* = 0$ if $\alpha_i \geq \overline{\alpha}$.
\end{prop}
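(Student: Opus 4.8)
The plan is to piggyback on \Cref{prop:eq with bribes}. The argument behind that proposition does not really use constancy of $\alpha$: it uses only that (a) whenever the government cannot yet win every remaining contest with certainty there is always a mutually acceptable bilateral deal, and (b) once it can, a further deal with group $i$ would require $B_i\ge(\alpha_i-1)\pi_i$, which is impossible because $\alpha_i>2$ for every $i$ and no group pays more than $\pi_i$. Re-running that reasoning law by law would give, in any equilibrium: the set of deals $\B$ satisfies the balance condition $\int_{\B^c}\pi_j\,\d j=K^G+\int_{\B}\pi_j\,\d j$ (hence $\int_{\B}\pi_j\,\d j=\tfrac12(\int_I\pi_j\,\d j-K^G)$ if $K^G<\int_I\pi_j\,\d j$, and $\B=\emptyset$ otherwise); $B_i=\pi_i$ for $i\in\B$; and, by \Cref{prop:eq wo bribes} applied to the residual contest set $\B^c$, $K_i=\pi_i$, $L_i=0$ on $\B^c$ while $K_i=L_i=0$ on $\B$. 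Exactly as in \Cref{prop:eq with bribes}, this leaves only the \emph{identity} of $\B$ open, and identifying it is the entire content of the proposition.

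Next I would reduce the government's equilibrium payoff to a functional of $\B$. Substituting the second-stage play and $K^T=K^G+\int_{\B}\pi_j\,\d j$ into \eqref{eq:govt prefs}, the unspent-capital term equals $K^G+\int_{\B}\pi_j\,\d j-\int_{\B^c}\pi_j\,\d j$, which the balance condition makes zero, so $u_G=-\int_{\B}\alpha_i\pi_i\,\d i$. Since the balance condition fixes the $\pi$-mass $\int_{\B}\pi_j\,\d j=M:=\tfrac12(\int_I\pi_j\,\d j-K^G)$, the government most prefers the $\B$ that places the laws with the \emph{smallest} $\alpha_i$ inside $\B$: a one-line exchange argument does it — if a positive-$\pi$-mass piece $A\subseteq\B$ has $\alpha_i\ge t$ while a piece $A'\subseteq\B^c$ of equal $\pi$-mass has $\alpha_i<t$, swapping $A$ and $A'$ strictly lowers $\int_{\B}\alpha_i\pi_i\,\d i$. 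Atomlessness of the distribution of $\alpha_i$ ensures $t\mapsto\int_{\{\alpha_j<t\}}\pi_j\,\d j$ is continuous and nondecreasing, so there is an $\overline\alpha$ with $\int_{\{\alpha_j<\overline\alpha\}}\pi_j\,\d j=M$ and $\{\alpha_j=\overline\alpha\}$ null; hence the government-optimal $\B$ is $\{\alpha_j<\overline\alpha\}$ up to a null set.

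The last step — the one I expect to be the real work — is to show that an \emph{equilibrium} in fact selects this government-optimal $\B$, since the government does not choose $\B$ directly; it is assembled from the bilateral bargains. The key is that the acceptance region of a deal is monotone in $\alpha_i$: the most a group will pay, $b_i^{max}$, is governed by its continuation payoff in the contest subgame, which never involves $\alpha_i$, whereas the least the government will accept, $b_i^{min}$, equals $\pi_i$ times a function strictly increasing in $\alpha_i$ (conceding a more harmful law costs it more), with the $\pi_i$-scaling common across laws. So, holding the aggregate state fixed, $\{i:b_i^{min}\le b_i^{max}\}$ is a sublevel set of $\alpha_i$; since an equilibrium requires a deal to be struck exactly where one is mutually acceptable (the rule defining $\Gamma_i$), $\B$ must be a sublevel set, and combined with the balance condition it is the set from the previous paragraph, pinning $\overline\alpha$ down uniquely (again using atomlessness to rule out indeterminacy at the boundary). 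The two delicate points are (i) making the bilateral comparisons precise in a continuum of simultaneous negotiations — which I would do by taking limits of finite economies so that the shadow value of a contribution and the marginal cost of conceding a single law are well defined — and (ii) checking that the $\overline\alpha$ implied by the acceptance-region cutoff and the $\overline\alpha$ implied by the balance condition are mutually consistent and jointly unique.
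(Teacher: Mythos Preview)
Your first two paragraphs are exactly the paper's argument: re-run the \Cref{prop:eq with bribes} logic to get the balance condition with $B_i=\pi_i$ on $\B$ and $K_i=\pi_i$ on $\B^c$, then use an exchange argument on $\alpha_i$ to force $\B$ to be a sublevel set. The paper treats that swap directly as a ``profitable deviation for the government'' and then verifies the threshold profile is an equilibrium; it does not take the extra route in your third paragraph.

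That third paragraph is more cautious than the paper, but it contains a concrete error. You claim $b_i^{max}$ ``never involves $\alpha_i$'' because the interest group's continuation payoff does not. With heterogeneous harms this is false: the government's second-stage allocation depends on $\alpha_i$ --- the interior first-order condition gives $K_i^*=\pi_i\bigl((\alpha_i-1)/(2\mu)\bigr)^2$ --- and $b_i^{max}=2\sqrt{\pi_iK_i^*}-K_i^*$ inherits that dependence. The conclusion you want does survive: one computes $b_i^{max}-b_i^{min}=2L_i^*>0$ iff $K_i^*<\pi_i$ iff $\alpha_i<2\mu+1$, so the acceptance set \emph{is} a sublevel set of $\alpha_i$, just not for the reason you state. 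Since the paper's swap argument already delivers the cutoff without this bilateral analysis, the cleanest fix is to drop the third paragraph and argue, as the paper does, that the swap itself is the relevant deviation.
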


This tells us exactly who does deals and makes contributions -- the government does deals and cedes contests over the laws that it finds least objectionable. This is because interest groups only care about their own benefits, $\pi_i$, and so are all equally willing to make a contribution (conditional on $\pi_i$ and $K^T$). The government effectively faces constant benefits but heterogeneous costs from taking contributions. It is natural to seek to minimise costs. The other features of this result are the same as in \Cref{prop:eq with bribes}. Note, the fact that the government plays a cut-off strategy is a result, not an assumption.

\subsection{Key Predictions}
When the government can do deals, the model generates three key stylised predictions. First, the government does not lose contests. It either cedes the contest before it even happens, or wins. This is consistent with anecdotes that it is rare for governments to lose votes in their parliament or congress \citep{mezey1979comparative, mayhew2004congress, russell2016policy}.

Second, the amount that special interest groups spend on politics will be substantially smaller than the total value of outcomes if the government is fairly strong (in terms of its initial endowment of political capital). Note that spending on politics can either come through lobbying, or through contributions to the government. Both involve the same kind of resources; what distinguishes them is who directs their spending and what they are used to achieve. 

This provides a possible explanation for `Tullock's Puzzle' -- an observation, often first attributed to \cite{tullock1972purchase}, that special interest groups spend little on lobbying relative to the very large consequences of government decisions. A standard explanation for this is the collective action problem \citep{olson1965} -- with many firms benefiting from a given law, they may struggle to coordinate and avoid free-riding. In contrast, our model abstracts away from possible coordination issues -- there is only one interest group that cares about each law, and it chooses lobbying directly. Low spending on lobbying in our model is instead driven by the way the government does deals and picks its battles.

However, the model can also rationalise increasing spending by special interest groups. Spending by special interest groups rises if the government becomes weaker relative to the total value of contestable outcomes. This could be because the government has less political capital initially (lower $K^G$), or because more issues become contestable or those issues become more important (higher $\int_i \pi_i \d i$).

Third, the government takes contributions from those whose laws are least harmful (i.e. have the lowest $\alpha_i$). By doing so it can win contests over the more harmful laws. 
This prediction has parallels to \cite{becker1983theory, becker1985public} in that competition between interest groups leads to a constrained social optimum. While outcomes are far from ideal, the government would need a larger endowment of political capital to improve welfare. 
Related to this, the model predicts that the government will engage in \emph{more} contests when it starts with more political capital -- and the government will win them all.

Simple comparative statics lead to two concrete predictions about what happens when a government becomes more powerful (in the sense that it has a greater endowment of political capital). First, it picks more battles. It then does correspondingly fewer deals with special interest groups -- a stronger government offers special treatment to fewer interest groups. Second, the `price' of that special treatment does not change. Because the government manages its affairs through the quantity of deals it does, it is always in a strong enough negotiating position against any individual special interest group to be able to extract large contributions. 

These stark results are driven in large part by simplifying assumptions we make to get clean results. In reality, we would not expect that the government never loses, or that it extracts the full value $\pi_i$ from special interest groups. Rather, the model seeks to highlight the importance of picking your battles and the benefits of back room deals.

\subsection{Extensions}\label{sec:extensions discussion}
We have kept the model very simple in order to focus on agents' behaviour and to highlight the role of deals. There are of course a number of natural variations to our setup. First, it might be that in the contests, agents choose political capital/lobbying simultaneously. In this case, government will take fewer contributions, and will not completely deter lobbying.

Second, dropping the outside option of political capital has little effect (although it removes the case where the government does deals in order to build up its war chest of political capital beyond the point needed to win all remaining contests for sure). Intuitively, this might correspond to a setting where officials' time and committee seats can only be used to persuade members of the parliament to vote on laws, and have no other potential use.
Finally, relaxing the benchmark assumption that legislation is perfectly effective has no meaningful impact. Suppose instead that legislation attenuated the costs and benefits by some law-specific multiple. Then the government considers only the portion of the costs it can affect, and behaviour is otherwise unchanged.
\section{Conclusion}\label{sec:conclusion}
This paper puts conflicts and power at the centre of the lobbying process to shed light on how a government spends political capital and sells special treatment. It considers a world where the government makes many different policy decisions and has limited resources. We find that when a government cannot do deals with special interest groups, it adopts a scattergun approach and spreads its political capital across all contests. In stark contrast, a government that can do deals will pick its battles carefully. It will only fight where it can win and will do a deal in all other cases. 

This mechanism can explain why even weak governments do not appear to lose legislative battles very often. They find it better to not fight at all and gain what they can from cutting a deal. Weak governments fight less, rather than less successfully. It also highlights the important role that backroom deals can play in the wider political landscape. Bilateral deals have indirect impacts on other interest groups, by putting the government in a better position to win other contests. 

Our model also highlights a tension between efficiency and equal treatment. This is because it shows how the special treatment given to certain interest groups can be part of a bigger picture that improves overall outcomes for the government and society, even if individual deals in isolation harm society. So rules or social norms that seek to prevent the government giving special treatment to some groups may not always be optimal. More nuanced rules, allowing special treatment but preventing members of the government from using contributions to enrich themselves, might be better.

The welfare implications of having the government do deals depends on the harmfulness of bad laws. When they are sufficiently harmful, the government will only do deals in order to help win other contests. In this case, allowing deals unambiguously improves welfare. However, when bad laws are less harmful, the government may do deals to build up its stock of political capital beyond the level needed to win contests. This excessive build-up of political capital can reduce welfare. 

The key contribution is to show how allowing the government to do deals with special interest groups has an important impact on behaviour and outcomes. It shows that, rather than behaving in a scattergun manner, the government picks its battles carefully. It only fights where it can win.

\newpage
\singlespacing
\bibliographystyle{abbrvnat}
\addcontentsline{toc}{section}{References}
\bibliography{bib}
\appendix

\section{Proofs of Main Results}\label{sec:proofs}
\subsection*{Proposition 1}
It is more convenient to state and then prove a generalised version of \Cref{prop:eq wo bribes}, as we will need it for later results. 

\begin{lem}
Fix a set of contributions $B_i$ (and hence $\B$ and $K^T$). Then there is a unique equilibrium:

(A) \ If $B_i \neq 0$: $L_i^* = 0$, and $K_i^* = 0$.

(B) \ If $B_i = 0$: $L_i^* = (\pi_i K_i^*)^{0.5} - K_i^*$ if $K_i^* \leq \pi_i$ and $L_i^* = 0$ otherwise, and $K_i^* = \pi_i Z$ if $Z \leq 1$ and $K_i^* = \pi_i$ otherwise,

where $Z = \frac{K^T}{\int_{j \in \B^c} \pi_j \d j}$
\end{lem}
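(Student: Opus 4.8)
The plan is to solve the contest stage by a nested backward induction: first characterise each interest group's best response to the government's spending, then substitute these into the government's objective and solve the resulting concave allocation problem. Part (A) is immediate: for $i\in\B$ the contest function fixes $p_i=0$ regardless of $K_i$ and $L_i$, so interest group $i$'s payoff $\pi_i-B_i-L_i$ is strictly decreasing in $L_i$, giving $L_i^*=0$; and the government's law-$i$ payoff $-\alpha\pi_i-K_i$ is strictly decreasing in $K_i$, while any $K_i>0$ merely tightens the budget constraint $\int_j K_j\,\d j\le K^T$ (the residual being valued at rate one), so $K_i^*=0$.

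For $i\in\B^c$ I would first pin down the interest group's best response to a given $K_i>0$. The payoff $u_i=\tfrac{L_i}{K_i+L_i}\pi_i-L_i$ is strictly concave in $L_i\ge 0$, so its maximiser is the unique solution of the first-order condition $K_i\pi_i/(K_i+L_i)^2=1$ when that solution is nonnegative, namely $L_i=(\pi_i K_i)^{1/2}-K_i$, and it is the corner $L_i=0$ when $K_i>\pi_i$ (where the derivative at $0$, namely $\pi_i/K_i-1$, is negative). The only awkward point is $K_i=0$ with $\pi_i>0$: there the Tullock function jumps and the interest group has no best response, so such a government action cannot be part of any equilibrium, and it suffices to restrict attention to $K_i>0$ on every contestable law with $\pi_i>0$ (laws with $\pi_i=0$ are trivial, with $K_i^*=L_i^*=0$).

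Substituting the interior best response (valid for $K_i\in[0,\pi_i]$) gives $p_i=(K_i/\pi_i)^{1/2}$ and $L_i=(\pi_i K_i)^{1/2}-K_i$; plugging these into the government's preferences, the $-\int K_i$ term cancels against the $+\int K_i$ generated by $-\int L_i$, and the objective collapses to
\[
u_G=-\alpha\int_I\pi_i\,\d i+K^T+(\alpha-1)\int_{\B^c}(\pi_i K_i)^{1/2}\,\d i ,
\]
whereas for $K_i>\pi_i$ the law-$i$ contribution is $-K_i$, strictly below its value $-\pi_i$ at $K_i=\pi_i$. Since $\alpha>1$, the government maximises $\int_{\B^c}(\pi_i K_i)^{1/2}\,\d i$ over $\{K_i\in[0,\pi_i]\}_{i\in\B^c}$ subject to $\int_{\B^c}K_i\,\d i\le K^T$, a concave program separable across laws up to one linear budget constraint. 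If $Z\ge 1$, the per-law objective is strictly increasing on $[0,\pi_i]$ and the per-law optimum $K_i=\pi_i$ is simultaneously feasible (it costs $\int_{\B^c}\pi_i\,\d i\le K^T$), so $K_i^*=\pi_i$ and $L_i^*=0$. If $Z<1$, the budget binds, and I would verify that $K_i^*=\pi_i Z$ is optimal by the tangent inequality for the concave map $K_i\mapsto(\alpha-1)(\pi_i K_i)^{1/2}$: its slope at $K_i^*=\pi_i Z$ is the $i$-independent constant $\mu=(\alpha-1)/(2Z^{1/2})$, so for any feasible $K$, $\int_{\B^c}(\alpha-1)(\pi_i K_i)^{1/2}\,\d i\le\int_{\B^c}(\alpha-1)(\pi_i K_i^*)^{1/2}\,\d i+\mu\bigl(\int_{\B^c}K_i\,\d i-K^T\bigr)\le\int_{\B^c}(\alpha-1)(\pi_i K_i^*)^{1/2}\,\d i$, with equality only if $K=K^*$ almost everywhere by strict concavity. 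Then $L_i^*=(\pi_i\cdot\pi_i Z)^{1/2}-\pi_i Z=\pi_i(Z^{1/2}-Z)$, as claimed; uniqueness of $L_i^*$ given $K_i^*$ follows from strict concavity of $u_i$, and existence holds because the constructed profile is by construction a mutual best response.

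The main obstacle is not the optimisation, which is a routine strictly concave problem with a single linear constraint, but the boundary behaviour of the Tullock function: when $K_i=0$ on a contestable law with $\pi_i>0$ the interest group's best-response set is empty. I would dispose of this by observing that such a government move is never part of an equilibrium, since the government's continuation value there is at best the $K_i\downarrow 0$ limit $-\alpha\pi_i$ of its law-$i$ payoff, strictly worse than the candidate $K_i^*=\pi_i\min\{Z,1\}>0$.
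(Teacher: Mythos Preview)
Your proposal is correct and follows essentially the same backward-induction route as the paper: solve the interest group's strictly concave problem for $L_i^*$, substitute into the government's objective, and solve the resulting concave allocation problem over $\{K_i\}_{i\in\B^c}$. The paper executes the government step via a finite-$n$ Lagrangian (equating $\tfrac{d\mathcal{L}}{dK_i}$ across $i$, then taking $n\to\infty$), whereas you work directly in the continuum and verify optimality of $K_i^*=\pi_i Z$ by the tangent inequality with common slope $\mu=(\alpha-1)/(2Z^{1/2})$; these are the same computation in slightly different dress, and your version is arguably tidier since it sidesteps the limit. You are also more careful than the paper at the $K_i=0$ boundary: the paper's best-response formula $L_i^*=(\pi_iK_i)^{1/2}-K_i$ silently returns $0$ there, while you correctly note that the interest group has no best response when $K_i=0$ and $\pi_i>0$, and rule this out of equilibrium.
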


\begin{proof}
It will be clearer to do the derivation for a finite number ($n$) of special interest groups, and then take the limit as $n \to \infty$. 
\textbf{Case A.} Trivial. $p_i = 0$ for all $K_i$, $L_i$. So $\frac{d u_i}{d L_i} < 0$, so $L_i^* = 0$. Similarly, $\frac{d u_G}{d K_i} < 0$, so $K_i^* = 0$. 

\textbf{Case B.} Solve backwards. Moving last, interest group $i$ maximises $u_i = \left( 1 - \frac{K_i}{K_i + L_i} \right) \pi_i - L_i$, subject to $L_i \geq 0$. We solve this in the standard way: take the First Order Condition, apply the standard quadratic formula, and account for the fact that $L_i \geq 0$.\footnote{The First Order Condition is: $K_i \pi_i (K_i + L_i)^{-2} - 1 = 0$. When solving for $L_i$ using the quadratic formula, it is straightforward to spot that the smaller of the two solutions is always negative, and so can be ignored. The larger of the two solutions is not always positive, so we choose the larger of (i) the solution and (ii) zero.} 
Doing so yields: $L_i^* = (\pi_i K_i)^{0.5} - K_i$ if $K_i \leq \pi_i$, and $L_i^* = 0$ otherwise.
Next, the government maximises $u_G$ subject to the constraint that it cannot spend more than $K^T$. So set up the Lagrangian for the government:
\begin{align}
    \mathcal{L} &= \sum_{i \in \mathcal{B}^c} \left([ - \alpha \pi_i (1- p_i) - L_i - K_i] \frac{1}{n} \right) + K^T - \mu \left(\sum_{i \in \mathcal{B}^c} K_i \frac{1}{n}  - K^T \right).
\end{align}
Because the government moves before the interest groups, we substitute in the interest groups' best response functions. These best response functions have two cases.\footnote{Note that we have omitted agents $j \in \B$ from the summation. This is because we know $p_j = 0$ and (hence) $K_j^* = 0$, $L_j^* = 0$, so the summation over $j \in \B$ is summation over a constant.} \textbf{First}, if $K_i \geq \pi_i$, then $L_i^* = 0$. So $p_i = 0$ and hence
\begin{align}
    \mathcal{L} = \sum_{i \in \B^c} \left[ - \alpha \pi_i - K_i \right] \frac{1}{n} + K^T - \mu \left(\sum_{i \in B^c} K_i \frac{1}{n} - K^T \right) 
    \quad \implies \quad 
    \frac{d \mathcal{L}}{d K_i} = \frac{-(1 + \mu)}{n} < 0,
\end{align}
so spending more than $\pi_i$ in a contest cannot be optimal. \textbf{Second}, if $K_i < \pi_i$, then $L_i^* = (\pi_i K_i)^{0.5} - K_i$. So $p_i = (K_i / \pi_i)^{0.5}$ and hence
\begin{align}
    \mathcal{L} &= \sum_{i \in \B^c} \left[ - \alpha \pi_i + (\alpha - 1) (K_i \pi_i)^{0.5} \right] \frac{1}{n} + K^T - \mu \left(\sum_{i \in B^c} K_i \frac{1}{n} - K^T \right), \\
    \frac{d \mathcal{L}}{d K_i} &= \frac{1}{2} \frac{1}{n} (\alpha - 1) \left( \frac{\pi_i}{K_i} \right)^{0.5} - \mu \frac{1}{n}. \label{eq:MBK}
\end{align}

Now suppose $\mu = 0$ (i.e. the government's budget constraint does not bind). Then $K_i < \pi_i$ implies that $\frac{d \mathcal{L}}{d K_i} > 0$. So if the budget constraint does not bind, then we must have $K_i \geq \pi_i$. But we know from above that $K_i > \pi_i$ cannot be optimal. Therefore, if the budget constraint does not bind, we must have $K_i = \pi_i$ for all $i$. This means the budget constraint binds if and only if $\int_{i \in \B^c} \pi_i \d i < K^T$.

Then, when the budget constraint \emph{does} bind, we must have $K_i < \pi_i$, and hence $\frac{d \mathcal{L}}{d K_i} = 0$, for some $i \in \B^c$. But when the budget constraint binds, the marginal benefit of spending must be equated across all contests. That is:
\begin{align*}
    \frac{d \mathcal{L}}{d K_i} = \frac{d \mathcal{L}}{d K_j} \implies \frac{K_i^*}{K_j^*} = \frac{\pi_i}{\pi_j} \ \text{ for all } i,j \in \B^c. \label{eq:K_star_normal}
\end{align*}
Since the budget constraint binds, we must have $\sum_{j \in \B^c} K_i^* \frac{1}{n} = K^T$. Summing over $j \in \B^c$ and rearranging yields:
\begin{align}
    K_i^* = \frac{\pi_i K^T}{ \frac{1}{n} \sum_{j \in \B^c} \pi_j }.
\end{align}
Finally, notice that as $n \to \infty$, $\frac{1}{n} \sum_{j \in \B^c} \pi_j \to \int_{j \in \B^c} \pi_j \d j$.
As required, this yields $K_i^* = \pi_i Z$, where $Z = \frac{K^T}{\int_{j \in \B^c} \pi_j \d j}$.
\end{proof}
\vspace{5mm}

Note that $K^T$, and hence $Z$, depend on the deals that get done (and the contributions paid). In Lemma 1, the contributions are fixed, so $K^T$ and $Z$ are exogenous. But when determining the \emph{equilibrium} contributions, $K^T$ and $Z$ are equilibrium objects. In order to understand the equilibrium contributions, it is first helpful to look at interest groups' willingness to pay contributions and the government's willingness to accept them.

\begin{lem}
Let $b_i^{max}$ denote the largest contribution interest group $i$ is willing to pay, and let $b_i^{min}$ denote the smallest contribution the government is willing to accept, as part of a deal.

(A) \ $b_i^{max} = \pi_i (2 Z^{0.5} - Z)$ if $Z \leq 1$, and $b_i^{max} = \pi_i$ otherwise.

(B) \ $b_i^{min} = \pi_i Z$ if $Z \leq 1$, and $b_i^{min} = \pi_i (\alpha - 1) $ otherwise.

(C) \ $b_i^{max} \geq b_i^{min}$ if and only if either $Z \leq 1$ OR $\alpha \leq 2$
\end{lem}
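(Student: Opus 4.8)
The whole statement reduces to Lemma~1, which pins down the continuation equilibrium as a function of $Z=K^T/\int_{\B^c}\pi_j\,\d j$. Fix an interest group $i$. Its outside option (``no deal'') is to play the Lemma~1 equilibrium with $i\in\B^c$; the alternative (``deal'') sets $p_i=0$ in exchange for a transfer $B_i$ of political capital to the government, after which $K_i=L_i=0$ for law $i$. Then $b_i^{max}$ (resp.\ $b_i^{min}$) is the value of $B_i$ equating interest group $i$'s (resp.\ the government's) payoff across the two scenarios. Because Lemma~1 branches at $Z=1$, I would carry each computation out twice: for $Z\le 1$ (budget binding, so $p_i=Z^{0.5}$, $K_i^{*}=\pi_i Z$, $L_i^{*}=\pi_i(Z^{0.5}-Z)$) and for $Z>1$ (budget slack, so $p_i=1$, $K_i^{*}=\pi_i$, $L_i^{*}=0$).

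Part (A) is a direct substitution into \eqref{eq:firm prefs}. With no deal and $Z\le 1$, $u_i=(1-Z^{0.5})\pi_i-\pi_i(Z^{0.5}-Z)=\pi_i(1-Z^{0.5})^{2}$, while a deal gives $u_i=\pi_i-B_i$; equating yields $b_i^{max}=\pi_i\big(1-(1-Z^{0.5})^{2}\big)=\pi_i(2Z^{0.5}-Z)$. With no deal and $Z>1$ the group loses for sure and gets $0$, so $b_i^{max}=\pi_i$. This step is mechanical.

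Part (B) is the substantive one. For $Z>1$ it is transparent: conceding law $i$ takes the government from its no-deal payoff on that law, $-\pi_i$ (it spends $\pi_i$ and wins for sure), to its deal payoff, $-\alpha\pi_i+B_i$ (it loses for sure, spends nothing, and collects $B_i$ at face value since the budget is slack); indifference reads $-\alpha\pi_i+B_i=-\pi_i$, so $b_i^{min}=(\alpha-1)\pi_i$. For $Z\le 1$ one instead compares the government's \emph{continuation} value with $i$ in $\B^c$ to its value with $i$ conceded: the deal deletes the payoff $-\alpha\pi_i+(\alpha-1)\pi_i Z^{0.5}$ that contest $i$ was contributing to \eqref{eq:govt prefs} and replaces it by the sure loss $-\alpha\pi_i$ (a local change of $-(\alpha-1)\pi_i Z^{0.5}$), while freeing the $K_i^{*}=\pi_i Z$ it was spending there and bringing in $B_i$, both of which — since the budget binds — are redeployed on the remaining contests and hence must be priced at the marginal value of political capital, which the first-order condition \eqref{eq:MBK} in the proof of Lemma~1 identifies as $\mu=(\alpha-1)/(2Z^{0.5})$. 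Setting the net change to zero, $\mu(B_i+\pi_i Z)=(\alpha-1)\pi_i Z^{0.5}$, and using $\mu\,\pi_i Z=(\alpha-1)\pi_i Z^{0.5}/2$, collapses to $b_i^{min}=\pi_i Z$. I expect this to be the hard part: getting the resource accounting exactly right, and making precise that a single deal is measure-zero so that ``$b_i^{min}$'' is really the marginal (shadow) rate at which the government will trade conceded laws for contributions — the very quantity that governs whether the deal set can be enlarged. It is also where the assumption that the government internalises interest-group lobbying does its work, since that is what makes the per-contest payoff, and hence $b_i^{min}$, linear in $K^T$ (equivalently, proportional to $Z$).

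Given (A) and (B), part (C) is a one-line comparison in each regime. If $Z\le 1$, $b_i^{max}-b_i^{min}=\pi_i(2Z^{0.5}-Z)-\pi_i Z=2\pi_i Z^{0.5}(1-Z^{0.5})\ge 0$, with equality only at $Z\in\{0,1\}$, so a mutually acceptable deal always exists — regardless of $\alpha$. If $Z>1$, $b_i^{max}=\pi_i$ and $b_i^{min}=(\alpha-1)\pi_i$, so $b_i^{max}\ge b_i^{min}$ iff $\alpha\le 2$. Combining the two regimes, $b_i^{max}\ge b_i^{min}$ exactly when $Z\le 1$ or $\alpha\le 2$, which is the claim.
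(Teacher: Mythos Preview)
Your proposal is correct and follows essentially the same approach as the paper's proof: both compute $b_i^{max}=p_i^*\pi_i+L_i^*$ by direct substitution from Lemma~1, both obtain $b_i^{min}$ by pricing the freed capital $K_i^*$ and the incoming contribution $B_i$ at the shadow value $\mu$ (with the paper noting the $\hat\mu=\mu$ fact explicitly and you flagging the same measure-zero issue informally), and both finish Part~(C) with the same two-regime comparison. The only cosmetic difference is that the paper first writes the general formula $b_i^{min}=\tfrac{1}{\mu}\bigl(p_i^*\alpha\pi_i-L_i^*-(1+\mu)K_i^*\bigr)$ and then substitutes, whereas you set up and solve the indifference condition $\mu(B_i+\pi_iZ)=(\alpha-1)\pi_iZ^{0.5}$ directly.
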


\begin{proof}
Suppose there is an offer for a deal in exchange for a contribution of $B_i$. 

\textbf{(A)} If the special interest group accepts, it gains (relative to rejection) $L_i^* + p_i^* \pi_i$ (saving on lobbying costs, plus an increased probability of getting a bad law), and it loses $B_i$. So it accepts if and only if $B_i \leq p_i^* \pi_i + L_i^*$. That is, $b_i^{max} = p_i^* \pi_i + L_i^*$.
If $Z \leq 1$, then (from Lemma 1) $K_i^* = \pi_i Z$, and $L_i^* = (\pi_i K_i^*)^{0.5} - K_i^*$. Substituting these into the equation for $b_i^{max}$ yields the result.\footnote{To see this, notice that when $Z \leq 1$: $p_i^* \pi_i + L_i^* = \frac{\pi_i K_i^*}{(\pi_i K_i^*)^{0.5} - K_i^* + K_i^*} + (\pi_i K_i^*)^{0.5} - K_i^* = 2(\pi_i K_i^*)^{0.5} - K_i^*$, then substitute in $K_i^* = \pi_i Z$ and simplify.} 
If $Z > 1$, then (from Lemma 1) $K_i^* = \pi_i$, and $L_i^* = 0$. The result is then clear.

\textbf{(B)} If the government accepts, it receives $- \alpha \pi_i$ from the contest, and has $K^T + B_i$ to spend on other contests. If it rejects, it receives $-\alpha \pi_i + p_i^* \alpha \pi_i - K_i^* - L_i^*$ from contest $i$, and has $K^T - K_i^*$ to spend on other contests. So accepting brings a net benefit of $- p_i^* \alpha \pi_i + K_i^* + L_i^*$ from contest $i$ and a benefit of $K_i^* \mu + B_i \hat{\mu}$ from being able to spend more in other contests, where $\mu$ is the Lagrange multiplier when the government fights contest $i$, and $\hat{\mu}$ when it does not. With a measure of agents, we have $\hat{\mu} = \mu$.\footnote{We prove this in the Online Appendix for completeness.}
Therefore, the government accepts if and only if $- p_i^* \alpha \pi_i + K_i^* + L_i^* + \mu B_i + \mu K_i^* \geq 0$. That is, $b_i^{min} = \frac{1}{\mu} (p_i \alpha \pi_i - L_i^* - (1+\mu)K_i^*)$.

If $Z \leq 1$, then substituting in from Lemma 1 yields $b_i^{min} = \frac{1}{\mu} ((\alpha - 1) (\pi_i K_i^*)^{0.5} - \mu K_i^*)$. It is straightforward to see from the Lagrangian that when $Z \leq 1$, we have $\mu = \frac{1}{2} (\alpha - 1) \pi_{i}^{0.5} (K_i^*)^{-0.5}$. Applying this yields $b_i^{min} = \frac{\mu}{\mu} K_i^* = K_i^*$.\footnote{To see this, notice that $(\alpha - 1) (\pi_i K_i^*)^{0.5} = (\alpha - 1) \left( \frac{\pi_i}{K_i^*} \right)^{0.5} K_i^* = 2 \mu K_i^*$. Hence we have $b_i^{min} = \frac{1}{\mu} (2 \mu K_i^* - \mu K_i^*)$.} 
Substituting in $K_i^* = \pi_i Z$ (again, from Lemma 1) completes the result.

If $Z > 1$, then again (from Lemma 1) $K_i^* = \pi_i$ and $L_i^* = 0$, and hence $p_i = 1$. Additionally, $\mu = 0$ because the government's budget constraint does not bind. The result is then clear.

\textbf{(C)} First, suppose $Z \leq 1$. From (A) we have $b_i^{max} = \pi_i(2 Z^{0.5} - Z) \geq \pi_i Z$. And from (B) we have $b_i^{min} = \pi_i Z$. Therefore $b_i^{max} \geq b_i^{min}$.
Now suppose $Z > 1$. From (A) we have $b_i^{max} = \pi_i$. And from (B) we have $b_i^{min} = \pi_i (\alpha - 1)$. Then $b_i^{max} \geq b_i^{min}$ if and only if $\alpha \leq 2$.  
Now consider the case where $Z > 1$. Then $b_i^{max} - b_i^{min} = \pi_i - (\alpha - 1) \pi_i$. Which is clearly non-negative if and only if $\alpha \leq 2$.   
\end{proof}

\subsection*{Proposition 2}
\textbf{Step 1.} \emph{First, $Z < 1$ is not possible in equilibrium.} Suppose $Z < 1$ in equilibrium. Then (a) there must exist some $i \in \B^c$, and (b) $b_i^{max} > b_i^{min}$. Claim (a) follows from the fact that if the government has done deals with every interest group, then $\B^c$ is an empty set, so $Z = \infty$. Claim (b) follows from Lemma 2. 
But this means there must be some profitable deviation to $B_i > 0$ for both the interest group and the government. Contradiction.

\textbf{Step 2.} \emph{Second, if $\alpha > 2$ and $Z > 1$ then the government cannot do any deals in equilibrium.} Suppose $\alpha > 2$ and $Z > 1$ and the government does at least one deal in equilibrium. Then (a) there must exist some $i \in \B$ and (b) $b_i^{max} < b_i^{min}$. Claim (a) is by assumption. Claim (b) follows from Lemma 2.
But this means there must be a profitable deviation to $B_i = 0$ for both the interest group and the government. Contradiction.

\textbf{Step 3.} This leaves three possibilities. (1) $\alpha \leq 2$. (2) $\alpha > 2$ and $Z > 1$ without any deals. This is the case where $K^G \geq \int_i \pi_i \d i$. (3) $\alpha > 2$ and $Z < 1$ without any deals. These are the three cases in the result.

\textbf{(1)} if $\alpha \leq 2$ then $b_i^{max} \geq b_i^{min}$ for all $i$, regardless of $Z$. So it must be that the government does a deal with every single interest group. Then the interest group is willing to pay up to $b_i^{max} = \pi_i$, and the government is willing to accept at least $b_i^{min} = (\alpha - 1) \pi_i$. So in equilibrium, a deal must be done somewhere within those bounds. This proves the first part of the result.

\textbf{(2)} if $\alpha > 2$ and $Z > 1$ without any deals, the government cannot do any deals in equilibrium (this is step 2). Since $Z > 1$, then it follows from Lemma 2 that $B_i^* = \pi_i$ for all $i \in \B$ and from Lemma 1 that $K_i^* = \pi_i$ for all $i \in \B^c$. And with no deals, $\mathcal{B}^c = I$. This proves the second part of the result.

\textbf{(3)} if $\alpha > 2$ and $Z < 1$ without any deals. The government must do deals (by step 1), but it cannot do so many deals that $Z > 1$ after the deals (step 2). So it must be that $Z = 1$ in equilibrium. 

Then it follows from Lemma 2 that $B_i^* = \pi_i$ for all $i \in \B$. And it follows from Lemma 1 that $K_i^* = 0$ and $L_i^* = 0$ whenever $B_i^* > 0$. This proves the first claim in the third part of the result.

It follows from Lemma 1 that $K_i^* = \pi_i$ and $L_i^* = 0$ for all $i \in \B^c$ (by substituting $Z=1$ into Lemma 1). That $B_i^* = 0$ for all $i \in \B^c$ is by definition. This proves the second claim in the third part of the result. 

Recall that by definition we have $Z = K^T / \int_{j \in \mathcal{B}^c} \pi_j \d j$, and $K^T = K^G + \int_j B_j \d j$. Then recall that by definition, $B_i = 0$ for all $i \in \mathbf{B}^c$. So $\int_j B_j \d j = \int_{j \in \mathcal{B}} B_j \d j$. From immediately above, we have $B_i^* = \pi_i$ for all $i \in \B$ and $Z = 1$ in equilibrium. Substituting these in to the equation for $Z$, we have
\begin{align}
    1 = \frac{K^G + \int_{j \in \mathcal{B}} \pi_j \d j}{\int_{j \in \mathcal{B}^c} \pi_j \d j }.
\end{align}
Finally, rearrange. This proves the third claim in the third part of the result.

\subsection*{Corollary 1}
From Proposition \ref{prop:eq with bribes}, we have $\int_{j \in \B^c} \pi_j \d j = K^G + \int_{j \in \B} B_j \d j$. Also from Proposition \ref{prop:eq with bribes}, we have $\int_{j \in \B} B_i \d j = \int_{j \in \B} \pi_j \d j$. And by assumption $\int_{j \in \B^c} \pi_j \d j + \int_{j \in \B} \pi_j \d j = \Pi$ where $\Pi$ is a constant. Substituting these in and rearranging, yields: $\int_{j \in \B} B_j = \frac{1}{2} (\Pi - K^G)$. Therefore
\begin{align*}
    \frac{d \left( \int_{j \in \B} B_j \right) }{d K^G} = -0.5.
\end{align*}
Both parts of the result follow immediately from this.

\subsection*{Proposition 3}
\textbf{Step 1.} \emph{The government does deals such that it wins all remaining contests with certainty, and does no more than this.} If follows from Lemma 2 that if $K_i^* < \pi_i$ then $b_i^{max} > b_i^{min}$. So a deal must be done. When $K_i^* = \pi_i$ is exactly the point at which it wins with certainty. Also, if $K_i^* = \pi_i$ for all $i \in \B^c$, then doing an extra deal cannot be optimal, because the benefit is $2 \pi_i$ and the cost is $\alpha_i \pi_i$ (with $\alpha_i > 2$ by assumption).\footnote{An immediate implication of this is that if the government's initial endowment of political capital is large enough for it to win all contests with certainty to begin with, then it does no deals at all. In that case, $\overline{\alpha} = \infty$ trivially.}

So it must be the case that (a) $K_i^* = \pi_i$ for all $i \in \B^c$ and (b) no further deals are done. An implication of (a) is that $B_i^* = \pi_i$ for all $i \in \B$. This is because $b_i^{max} = b_i^{min} = \pi_i$ when the government has sufficient capital to win the contest for certain if a deal is not reached.
This pins down the value of deals. It is analogous to Proposition \ref{prop:eq with bribes}.

\textbf{Step 2.} \emph{The government cannot do deals with interest groups that are more harmful than (i.e. higher $\alpha_i$) ones it fights.} We show this by contradiction. Suppose the government does deals with a set of interest groups $X$, and does not do deals with a set of groups $Y$, where both sets have positive measure. Now suppose that $\alpha_i > \alpha_j$ for all $i \in X$ and all $j \in Y$.

Then the government has a profitable deviation where it stops doing deals with some members of $X$, and starts doing deals with some members of $Y$, in such a way as to keep $\int_{i \in \B} \pi_i$ constant.
This would increase the government's utility by reducing the value of the payoffs it concedes, without affecting its budget. Hence this represents a profitable deviation. This continues until one of the two sets is exhausted.\footnote{Either there exists $Y' \subseteq Y$ such that $\int_{i \in X} \pi_i \d i = \int_{i \in Y'} \pi_i \d i$ or there exists $X' \subseteq X$ such that $\int_{i \in X'} \pi_i \d i = \int_{i \in Y} \pi_i \d i$ (or both). This means the government can swap to `better' deals while maintaining a constant quantity of contributions.}
So in equilibrium we must have $\alpha_i \geq \alpha_j$ for all $i \in \B$ and all $j \in \B^c$.
Given that $\alpha_i$ is finite for all $i$ and the CDF of $\alpha_i$'s is atomless, there then must exist some threshold $\overline{\alpha}$ such that $\alpha_i \geq \overline{\alpha}$ and $\alpha_j < \overline{\alpha}$ for all $i \in \B$ and all $j \in \B^c$.\footnote{If there were an atom, then it could be that the threshold occurs at the atom, and the government does with some, but not all, interest groups at a given value of $\alpha$. This would generate multiplicity a la \Cref{prop:eq with bribes}.}

We have shown that if an equilibrium exists, then it must take the form in the result. To complete the proof, we now show that it is in fact an equilibrium. 

\textbf{Step 3.} \emph{The threshold behaviour is an equilibrium.} No interest group can profitably deviate. (i) Interest groups paying $B_i = \pi_i$ would face $K_i = \pi_i$ if they deviated. (ii) Interest groups not doing a deal face $K_i^* = \pi_i$ and so receive $u_i = 0$. Their choice of $L_i^* = 0$ is clearly a best response. Offering $B_i > \pi_i$ clearly leads to lower payoffs. (iii) The government already has sufficient political capital to win all remaining contests with certainty, so it cannot profitably take additional bribes. The government cannot profitably take fewer bribes as shown in step 1. 
Finally, the government cannot profitably change the set of interest groups it takes bribes from -- as doing so would involve winning a contest for a less harmful law in exchange for ceding a contest for a more harmful law.


\newpage
\section*{Online Appendix}
This Online Appendix provides a range of additional material. The contents of each appendix are summarised below. \\

\noindent \textbf{\Cref{sec:contest microfoundations}} provides microfoundations for the Tullock contest function (\Cref{eqn:contest fn}). It shows that the contest function is the outcome of a game where contesting parties produce evidence. An arbiter who observes the evidence with some noise picks the winner based on the evidence she observes. \\

\noindent \textbf{\Cref{sec:coordination problems}} shows that society would want to provide the government with enough political capital to win all contests for sure. However, in line with \cite{olson1965}, we would not expect society to be able to organise the collective action needed to achieve this outcome. \\

\noindent \textbf{\Cref{sec:simulations}} shows the outcome of a set of simulations that use a finite number of special interest groups. These simulations suggest that the assumption a mass of interest groups (as in \Cref{sec:model}) does not meaningfully affect outcomes. \\

\noindent \textbf{\Cref{sec:extensions}} provides technical details of the extensions discussed in \Cref{sec:extensions discussion}. First, we consider a variant where the government and special interest groups choose political capital and lobbying simultaneously. Obviously negotiation always happens before potential contests. 
Second, we explore a variant where political capital is use-it-or-lose-it and so has no shadow value.
Finally, we show that allowing for arbitrary heterogeneity in the efficacy of laws (i.e. the extent to which a good law reduces the harms to society/benefits to the special interest group) has no meaningful impact on results. \\

\noindent \textbf{\Cref{sec:extra_proofs}} provides proofs of a claim regarding robustness to using a non-cooperative bargaining protocol in the model (made in \Cref{sec:model}), and an obvious claim made in the proof to Lemma 2.

\newpage
\section{Microfoundations of the Contest Function}\label{sec:contest microfoundations}
The Tullock contest function in \Cref{eqn:contest fn} (for when $i \notin \mathcal{B}$) is a critical component of the model. Here we provide micro-foundations, showing that it is the outcome of a binary choice problem. The relationship with the Logit model is straightforward and has been noted by \cite{jia2008stochastic} and \cite{fu2012micro}. The micro-foundations here are similar to those in \cite{skaperdas2012persuasion}. \\

\noindent \textbf{Arbiter.} There is an arbiter (possibly Nature) who evaluates \emph{evidence} for and against the proposed law and then chooses her most preferred outcome. She is unbiased in that she always chooses the outcome she observes as being better supported by evidence. However, she observes evidence with some additive noise. 
Define the evidence in favour of the government's [resp. interest group's] preferred option as $V_G$ [resp. $V_I$]. The arbiter observes this evidence, plus some noise, $\epsilon_G$ [resp. $\epsilon_I$]. Her utility is then:
\begin{align}
    U_G = V_G + \epsilon_G  \quad , \quad
    U_I = V_I + \epsilon_I,
\end{align}
where $\epsilon_G, \epsilon_I \ \overset{i.i.d}{\sim} \text{ GEV (Type I) }$. The arbiter chooses the option with higher utility. \\

\noindent \textbf{Evidence production.} The government and the interest group use political capital and lobbying effort (respectively) to produce evidence, according to the following production functions:
\begin{align}\label{eq:production fn}
    V_G = \ln(K_i)  \quad , \quad 
    V_I = \ln(L_i).
\end{align}
We can think of evidence as policy papers, think-tank reports, data analysis, economic theory, or any other type of intellectual output designed to convince a neutral arbiter. We can also take the term ``evidence'' much more loosely to mean \emph{anything} that would convince the arbiter. We could reasonably consider outright bribery and/or intimidation as evidence production in this loose sense. \\

\noindent \textbf{Arriving at the contest function.} Having set up the binary choice problem, the Tullock contest function follows straightforwardly from well-known facts about the Logit model.\footnote{For more details on the Logit and other binary choice models, see \cite{train2009discrete}.} 
The arbiter chooses the government's preferred option if and only if $U_G > U_I$ (otherwise, it sides with the interest group). Therefore, the probability the government ``wins'' is equal to $Pr(U_G > U_I) = Pr(V_G + \epsilon_G > V_I + \epsilon_I) = Pr(\epsilon_I < \epsilon_G + V_G - V_I)$. This is now identical to the Logit choice probabilities:
\begin{align}\label{eq:contest fn microfounded}
    p_i &= \frac{\exp\{V_G\}}{\exp\{V_G\} + \exp\{V_I\}}
    \ = \frac{K_i}{K_i + L_i},
\end{align}
which is the Tullock contest function.\footnote{The functional form for evidence production determines the functional form of the contest function. For example, a linear production function would obviously lead to the less common (but still fairly well-studied) exponential form of the contest function. In the Online Appendix, we modify the evidence production function to derive a generalised version of the contest function. } 
A stark feature of this functional form is that when $K_i = 0$ then $V_G = - \infty$. This corresponds to the idea that if one side provides absolutely no evidence whatsoever (and the other provides at least some), then the decision is for the arbiter is very clear. She (the arbiter) is able to identify when there is no evidence, and avoid that option.

\section{Transfers} \label{sec:coordination problems}
Corollary 1 (in \Cref{sec:results}) tells us that the government takes more contributions when it has a smaller endowment of political capital. This means it cedes more contests, which, all else equal, is bad for society (`society' being some agents who care only about the payoffs from the outcomes of the laws). A natural question is then whether society would want to transfer its own resources to the government. 

However, we immediately run into the standard public goods problem. While each individual member of society (\emph{`citizen'}) would like to see a transfer of resources to the government to augment its political capital, they do not want to make this transfer themselves. They want to free ride. 
While society as a whole is not modelled explicitly in \Cref{sec:model}, it is natural to think it consists of many agents. So even if free-riding issues could be resolved, we would expect significant difficulties in achieving collective action \citep{olson1965}.
One mechanism for solving the collective action problem is for citizens to vote. \\

\noindent \textbf{Voting.} There are two \emph{candidates}, $X$ and $Y$. They propose non-negative transfers $\tau_X \geq 0$ and $\tau_Y \geq 0$ respectively. Assume these transfers must be anonymous -- so they must be the same for all citizens. A candidate receives a payoff of $u_G(\tau)$ if elected, and $-A$ if not elected (where $A$ is a large finite number). We assume that the elected candidate must levy the promised transfer, and can do so costlessly.

Each citizen votes non-strategically for her preferred candidate. She randomises if she is indifferent between the two candidates.
The outcome of this is simple: competition between the two candidates push both to offer the transfer that is most preferred by citizens.

\begin{prop}
There is a unique outcome:
(i) If $\alpha \in (1,2]$, both candidates propose $\tau_X = \tau_Y = 0$, \\
(ii) If $\alpha > 2$, both candidates propose $\tau_X = \tau_Y = \int_i \pi_i \d i - K^G$, \\
(iii) Citizens are indifferent between the two candidates.
\end{prop}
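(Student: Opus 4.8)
The plan is to solve this three‑stage game (candidates propose transfers; citizens vote; the lobbying game of \Cref{sec:model} is then played with the winner's transfer added to the government's endowment) by backward induction. First I would, for each feasible $\tau\ge 0$, read off the equilibrium outcome of the lobbying subgame from \Cref{prop:eq with bribes} with initial endowment $K^G+\tau$ in place of $K^G$, and convert it into the continuation payoff $v(\tau)$ of a representative citizen — (social welfare from law outcomes) minus the (per‑citizen) cost of the transfer she funds. When $\alpha\le 2$, \Cref{prop:eq with bribes} says the government cedes \emph{every} law whatever its endowment, so every law is bad and $v(\tau)=-\int_i\alpha\pi_i\,\d i-\tau$, strictly decreasing; hence the citizen‑optimal transfer is $\tau^\ast=0$, which is part (i). When $\alpha>2$ (and, as is implicit in the statement, $K^G<\int_i\pi_i\,\d i$), \Cref{prop:eq with bribes} gives that for $K^G+\tau<\int_i\pi_i\,\d i$ the $\pi$‑weighted mass of ceded (bad) laws is $\tfrac12\bigl(\int_i\pi_i\,\d i-K^G-\tau\bigr)$, while for $K^G+\tau\ge\int_i\pi_i\,\d i$ no law is bad; so $v(\tau)=-\tfrac{\alpha}{2}\bigl(\int_i\pi_i\,\d i-K^G-\tau\bigr)-\tau$ on the first region and $v(\tau)=-\tau$ on the second. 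Since $\alpha>2$, $v$ is strictly increasing then strictly decreasing, a ``tent'' with unique peak $\tau^\ast=\int_i\pi_i\,\d i-K^G$, which is part (ii). (If $K^G\ge\int_i\pi_i\,\d i$ then $v(\tau)=-\tau$ throughout and $\tau^\ast=0$; one reads (ii) as $\tau^\ast=\max\{0,\int_i\pi_i\,\d i-K^G\}$.)

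Next I would run the Downsian argument on the electoral stage. Because transfers are anonymous and citizens care only about law outcomes, all citizens share the single‑peaked payoff $v$; each votes (non‑strategically) for the candidate offering the higher $v(\tau)$, so that candidate wins outright, with a coin flip when the $v$‑values tie. I would also record, again directly from \Cref{prop:eq with bribes}, that the winner's payoff $u_G(\tau)$ is continuous in $\tau$ and bounded below on $\tau\ge 0$ — it equals $-\tfrac{\alpha}{2}(\int_i\pi_i\,\d i-K^G-\tau)$ and then $K^G+\tau-\int_i\pi_i\,\d i$ when $\alpha>2$, and $-\alpha\int_i\pi_i\,\d i+K^G+\tau+\int_i B_i\,\d i$ (with $\int_i B_i\,\d i\le\int_i\pi_i\,\d i$) when $\alpha\le2$ — so choosing $A$ larger than this bound makes $u_G(\tau)>-A$ for every feasible $\tau$. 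Then $(\tau^\ast,\tau^\ast)$ is an equilibrium: any unilateral deviation to $\tau\neq\tau^\ast$ loses for sure, giving $-A$, which is worse than the tie payoff $\tfrac12 u_G(\tau^\ast)+\tfrac12(-A)$. For uniqueness: if the two proposals induce different $v$‑values, the losing candidate strictly gains by switching to $\tau^\ast$ (or to match the opponent when the opponent already plays $\tau^\ast$); if both propose a common $\tau\neq\tau^\ast$, then, since $v$ is strictly monotone between $\tau$ and $\tau^\ast$, a candidate can move infinitesimally toward $\tau^\ast$, win outright, and secure $u_G$ arbitrarily close to $u_G(\tau)>-A$, beating $\tfrac12 u_G(\tau)+\tfrac12(-A)$. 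So the unique equilibrium has both candidates at $\tau^\ast$, and since they then offer the same transfer every citizen obtains $v(\tau^\ast)$ regardless of the winner — part (iii).

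The one genuinely delicate point is the uniqueness half: because the government values unspent political capital one‑for‑one, $u_G(\tau)$ is \emph{unbounded above} in $\tau$, so the naive ``jump to $\tau^\ast$ and win'' deviation need not be profitable when both candidates already propose a huge transfer (the winner's payoff is then large and jumping to $\tau^\ast$ shrinks it). The fix — an essential part of the argument — is the ``undercut by $\epsilon$ toward $\tau^\ast$'' deviation, which turns a coin flip into a sure win while moving $u_G$ only a little, and which needs only continuity of $u_G$ and the lower bound $u_G>-A$ established above. Everything else is bookkeeping with \Cref{prop:eq with bribes}.
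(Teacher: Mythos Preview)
Your argument follows the same backward-induction route as the paper: compute the citizen's continuation value $v(\tau)$ from \Cref{prop:eq with bribes}, locate its maximiser $\tau^\ast$, and then run a Downsian convergence argument on the electoral stage. The computations of $v(\tau)$ and $\tau^\ast$ match the paper's (indeed, your treatment of the $\alpha\in(1,2]$ case---``every law is ceded regardless of $\tau$, so $v(\tau)=-\alpha\Pi-\tau$''---is cleaner than the paper's, which tries to push the $\alpha>2$ budget identity through that case where it does not literally hold).

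Where you go beyond the paper is the uniqueness half of the electoral argument. The paper simply asserts that if both candidates propose some common $\tau\neq\tau^\ast$, one can deviate to $\tau^\ast$ and win; but since $u_G(\tau)$ is unbounded above in $\tau$, jumping to $\tau^\ast$ can lower the winner's payoff enough that this ``jump'' deviation fails. Your $\epsilon$-undercut toward $\tau^\ast$---which converts a coin flip into a sure win while moving $u_G$ only negligibly, and which needs only continuity of $u_G$ and the lower bound $u_G(\tau)\ge u_G(0)>-A$---is the right fix and closes a genuine gap in the paper's proof.
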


\begin{proof}
As all citizens are identical in our model, it suffices to consider a single person. For concreteness, a citizen's utility function is:
\begin{align}\label{eq:C1}
    u_c = \alpha \int_i (p_i - 1) \pi_i \d i - \tau.
\end{align}
where $\tau$ is the transfer she makes to the government. She bears costs from laws being bad, and from making a transfer.
Substituting in optimal government behaviour by the government (from \Cref{prop:eq with bribes}) yields:
\begin{align}\label{eq:C2}
    u_c = - \alpha \int_{i \in \B} \pi_i \d i - \tau.
\end{align}
This is because we know that the government cedes contests to all $i \in \B$ (guaranteeing the bad law), and wins for certain against all $i \in \B^c$. There are two cases to consider. 

\textbf{Case one:} transfers beyond the point that the government can win all contests with certainty. That is; $K^G + \tau > \int_i \pi_i$. A citizen clearly does not want to do this, because once $K^G + \tau \geq \int_i \pi_i$, then $p_i = 1$ for all $i$, and so $u_c = - \tau$.

\textbf{Case two:} transfers such that $K^G + \tau \leq \int_i \pi_i$. In this case, we know that in equilibrium: (a) $(K^G + \tau) + \int_{i \in B} B_i \d i = \int_{i \in \B^c} \pi_i$, (b) $B_i = \pi_i$ when $\alpha > 2$, and (c) $B_i \in [(\alpha - 1) \pi_i , \pi_i]$ when $\alpha \in (1,2]$. For convenience, define $\Pi = \int_i \pi_i \d i = \int_{i \in \B} \pi_i \d i + \int_{i \in \B^c} \pi_i \d i$. Some straightforward rearranging then yields:
\begin{align}\label{eq:C3}
    \int_{i \in \B} \pi_i &= \frac{1}{\chi} [\Pi - (K^G + \tau)]
\end{align}
with $\chi = 2$ when $\alpha > 2$, and $\chi \in [\alpha,2]$ when $\alpha \in (1,2]$.
Then substituting this into the citizen's utility function yields:
\begin{align}
    u_c &= \tau \left(\frac{\alpha}{\chi} - 1 \right) - \frac{\alpha}{\chi} \left( \Pi - K^G \right) \\
    \implies \frac{d u_c}{d \tau} &= \left(\frac{\alpha}{\chi} - 1 \right),
\end{align}
which is strictly positive if and only if $\alpha > 2$. 

This means that if $\alpha > 2$, the citizen's utility is increasing in the transfer $\tau$ up to the point where $K^G + \tau = \int_i \pi_i$, and decreasing beyond that. And if  $\alpha \in (1,2]$, the citizen's utility is weakly decreasing in $\tau$.

Due to competition between the two candidates, both offer the transfer that is most preferred by citizens: namely $\tau = 0$ if $\alpha \in (1,2]$ and $\tau = \int_i \pi_i \d i - K^G$ if $\alpha > 2$. This follows from a simple contradiction argument. If neither offer the specified transfer, then one can profitably deviate (by offering the specified transfer) and win the vote with probability one. If only one offers the specified transfer, the other can deviate (by offering the specified transfer) and win the vote with probability one half (rather than zero).
\end{proof}

As an obvious aside, it is worth noting that the government's utility is always increasing in the transfer it receives.

\begin{cor}
The government's payoff, $u_G$, is strictly increasing in the transfer, $\tau$, it receives. 
\end{cor}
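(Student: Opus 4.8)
The transfer $\tau$ affects the model only through the government's resources: it is as if the endowment were $K^G + \tau$, since the government's total political capital becomes $K^T = (K^G + \tau) + \int_i B_i \d i$. Hence the entire analysis behind \Cref{prop:eq with bribes} (and Lemmas~1--2) carries over verbatim with $K^G$ replaced by $K^G + \tau$, because Lemma~1 depends only on $K^T$ and $\B$ and Lemma~2 only on $Z = K^T/\int_{i\in\B^c}\pi_i\d i$. The plan is then to substitute the resulting equilibrium into the government's objective \eqref{eq:govt prefs} and check monotonicity in $\tau$ regime by regime. In any equilibrium, $L_i^* = 0$ for all $i$, $K_i^* = \pi_i$ and $p_i^* = 1$ for $i\in\B^c$, $K_i^* = 0$ and $p_i^* = 0$ for $i\in\B$, and contributions vanish off $\B$, so
\[ u_G = -\alpha\int_{i\in\B}\pi_i\d i + \Big(K^G+\tau+\int_{i\in\B}B_i^*\d i - \int_{i\in\B^c}\pi_i\d i\Big), \]
and it only remains to evaluate this in the three cases of \Cref{prop:eq with bribes}.

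If $\alpha\in(1,2]$, the government concedes every law, so $\B=I$ and $\B^c=\emptyset$; with $\B^c$ empty, $Z=\infty$, and by Lemma~2 the bargaining bounds $b_i^{min}=(\alpha-1)\pi_i$ and $b_i^{max}=\pi_i$ --- and hence the selected $B_i^*=\Gamma_i(b_i^{min},b_i^{max})$ --- do not depend on $\tau$. Therefore $u_G = -\alpha\int_i\pi_i\d i + K^G + \tau + \int_i B_i^*\d i$ is affine in $\tau$ with slope $1>0$.

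If $\alpha>2$ and $K^G+\tau\geq\int_i\pi_i\d i$, the government does no deals and wins everything, so $\B=\emptyset$ and $u_G = K^G+\tau-\int_i\pi_i\d i$, again with slope $1$. If $\alpha>2$ and $K^G+\tau<\int_i\pi_i\d i$, then \Cref{prop:eq with bribes} gives $B_i^*=\pi_i$ on $\B$ together with $\int_{i\in\B^c}\pi_i\d i = K^G+\tau+\int_{i\in\B}\pi_i\d i$; combining this with $\int_{i\in\B}\pi_i\d i + \int_{i\in\B^c}\pi_i\d i = \int_i\pi_i\d i$ pins down $\int_{i\in\B}\pi_i\d i = \tfrac12\big(\int_i\pi_i\d i - K^G - \tau\big)$ --- exactly the computation in the proof of Corollary~1, with $K^G$ replaced by $K^G+\tau$. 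Plugging $B_i^*=\pi_i$ into the displayed expression for $u_G$ makes the bracketed term vanish, leaving $u_G = -\alpha\int_{i\in\B}\pi_i\d i = -\tfrac{\alpha}{2}\big(\int_i\pi_i\d i - K^G - \tau\big)$, which increases in $\tau$ with slope $\alpha/2>0$.

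Finally I would note that, as a function of $\tau\geq 0$, $u_G$ is continuous --- the two $\alpha>2$ expressions agree, both equal to $0$, at $K^G+\tau=\int_i\pi_i\d i$ --- and piecewise affine with a strictly positive slope on each piece, hence strictly increasing throughout. There is no genuinely difficult step here; the only point that needs care is the regime $\alpha\in(1,2]$, where the set of equilibrium contributions is a continuum, and one must observe that the bargaining bounds, and therefore whichever $B_i^*$ is selected, are independent of $\tau$, so that $u_G$ is a well-defined, strictly increasing function of $\tau$ rather than an ambiguous correspondence.
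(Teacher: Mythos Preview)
Your proposal is correct and follows essentially the same route as the paper: substitute the equilibrium characterisation from \Cref{prop:eq with bribes} (with $K^G$ replaced by $K^G+\tau$) into the government's payoff \eqref{eq:govt prefs} and verify that the resulting expression has strictly positive slope in $\tau$. The paper condenses your three-case split via an auxiliary parameter $\chi$, but your explicit regime-by-regime treatment---in particular your observation that for $\alpha\in(1,2]$ the bargaining bounds, and hence the selected $B_i^*$, are independent of $\tau$---is arguably the cleaner way to handle that case.
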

\begin{proof}
Recall that the government's utility is:
\begin{align}
    u_G = \int_i \left[ - \alpha \pi_i (1 - p_i) - L_i - K_i \right] \d i + K^T,
\end{align}
where now $K^T = K^G + \tau + \int_{i \in \B} B_i \d i$. Then substituting in the equilibrium outcomes from \Cref{prop:eq with bribes}:
\begin{align}
    u_G = - \alpha \int_{i \in \B} 2 \pi_i \d i + K^T.
\end{align}
Now substituting in \Cref{eq:C3} and rearranging yields:
\begin{align}
    u_G &= \frac{1}{\chi} \left( \alpha - 1 + \frac{1}{\chi} \right) ( K^G + \tau) + \frac{1}{\chi} (1 - \alpha) \Pi \\
    \implies \frac{d u_G}{d \tau} &=  \frac{1}{\chi} \left( \alpha - 1 + \frac{1}{\chi} \right) > 0.
\end{align}
\end{proof}

\newpage
\section{Simulations}\label{sec:simulations}
This section presents simulations with a finite number of special interest groups. The goal here is to understand what happens when we relax the assumption that there is a mass of special interest groups, and that they are all arbitrarily small. The key finding here is that relaxing this assumption has minimal impact on the results.

In the proof to \Cref{prop:eq with bribes}, the continuum assumption allows us to ignore (i.e. treat as very small) the impact that an individual interest group has on the government's total budget. So taking one contribution from a single interest group does not affect the marginal value of political capital. Only contributions from a positive measure of interest groups will do so.\footnote{The assumption shows up in the proof to \Cref{prop:eq with bribes} in the claim that ``Spending these resources has a marginal benefit $\frac{d u_G}{d K_i}$'', and that this partial derivative is not treated as a function of $B_i$.} 
When we relax this assumption, taking contributions becomes ``lumpy'' in the sense that a single contribution can have a material impact on the government's overall budget of political capital.

These simulations are not designed to provide an extensive search of the parameter space. They are merely designed to provide reassurance that the mass of interest groups assumption we made in the main paper is not driving the insights of the model. To present additional challenge to our model (and to simplify the algorithm) we only allow the government to choose whether or not to do a deal with each special interest group in sequence. This presents two constraints on government behaviour compared to the model in \Cref{sec:model}. First, if the government chooses not to do a deal with interest group $i = n$, then it cannot do a deal with any group $i > n$ (where $i$ is the interest group's identity, and $n$ is an integer). Second, the government cannot change the interest groups it does deal with. Once a deal is done, it cannot be undone. Additionally, we assume that for each deal, the special interest group pays the maximum amount it is willing do, given the government's position at the point the deal is negotiated. Again, this helps simplify the implementation.

Given these additional constraints, we should expect these simulations to perform \emph{worse} (i.e. further away from the analytic results) than a more complex environment where the government does not face these constraints. Nevertheless, the simulated equilibria are close to the analytic predictions.

\paragraph{A simple example.} Set $\alpha = 3$, $\pi \sim U[1,11]$, $K^G = 10$, and $|I| = 10$ (i.e. only 10 special interest groups). 
The small number of interest groups provides a more stringent test of the assumption. \Cref{fig:simulation_example} clearly shows that the general insight of the model is not lost when moving to a small number of special interest groups. However, it is difficult to see the exact value of $K_i$ and $B_i$ from the image. Exact results are listed in \Cref{tab:simulation_example}. 

\begin{figure}[ht!]
    \centering
    \includegraphics[scale=0.9]{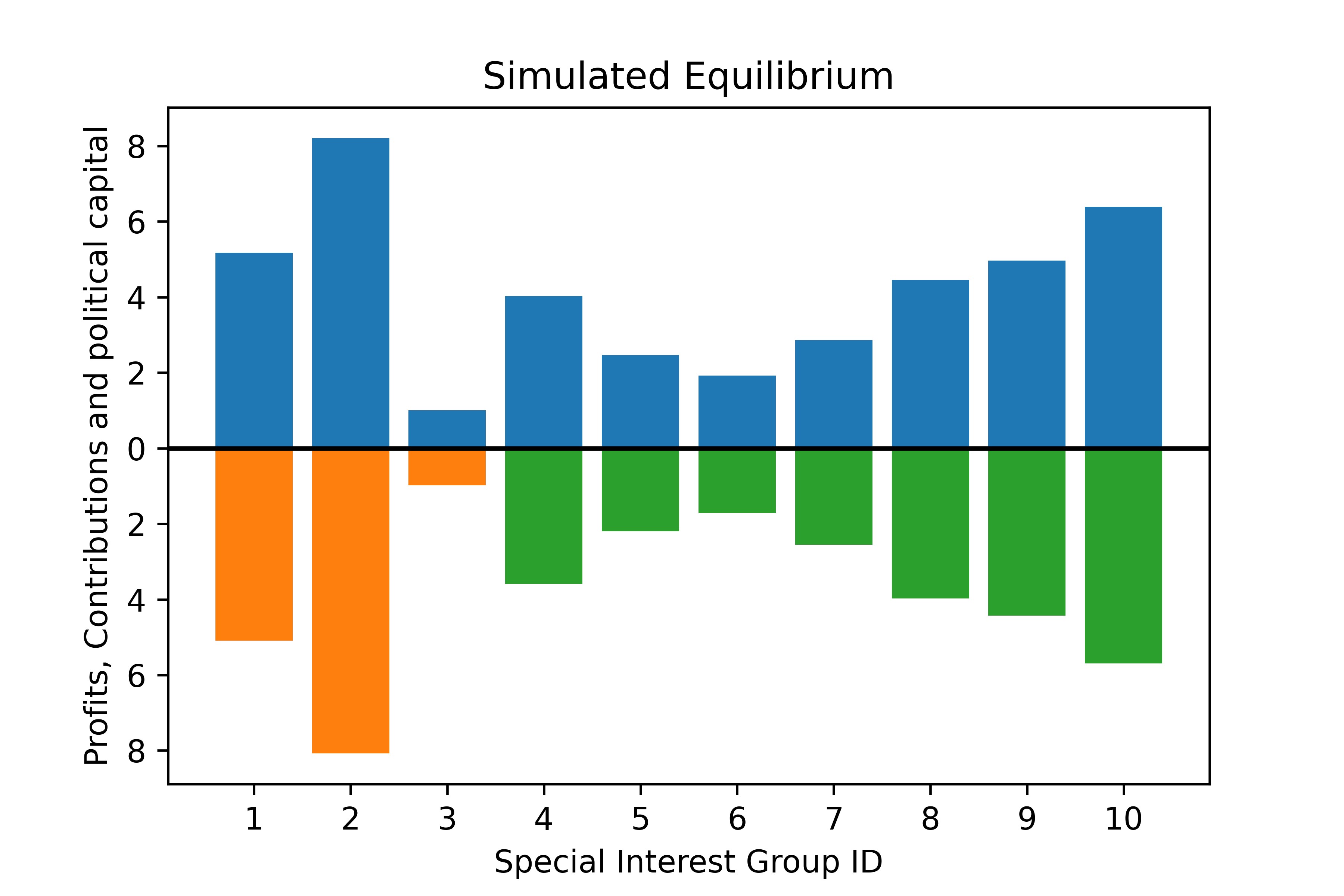}
    \caption{Simulated outcomes for $|I| = 10$, $\alpha = 3$, $K^G = 10$, $\pi \sim U[1,11]$. Special interest group benefits from bad law reported (in blue) above the centre line. Contributions (in orange) and political capital (in green) spending reported below the centre line.}
    \label{fig:simulation_example}
\end{figure}

\begin{table}[htbp!]
\setlength{\tabcolsep}{24pt}
\begin{center}
\caption{Exact results for the simulation reported in \Cref{fig:simulation_example}.}
\label{tab:simulation_example}
\begin{tabular}{llllll}
ID & $\pi_i$ & $B_i$    & $K_i$   & $B_i / \pi_i$  & $K_i^* / \pi_i$   \\
1  & 5.170 & 5.087 & 0     & 0.984 & -     \\
2  & 8.203 & 8.071 & 0     & 0.984 & -     \\
3  & 1.001 & 0.985 & 0     & 0.984 & -     \\
4  & 4.023 & 0     & 3.586 & -     & 0.891 \\
5  & 2.468 & 0     & 2.199 & -     & 0.891 \\
6  & 1.923 & 0     & 1.714 & -     & 0.891 \\
7  & 2.863 & 0     & 2.551 & -     & 0.891 \\
8  & 4.456 & 0     & 3.971 & -     & 0.891 \\
9  & 4.968 & 0     & 4.427 & -     & 0.891 \\
10 & 6.388 & 0     & 5.694 & -     & 0.891
\end{tabular}
\end{center}
\end{table}

\paragraph{A fuller test.} More substantively, we examine the equilibria as the number of special interest groups changes. Here, we simply keep track of the simulated equilibrium value of $Z$. Since we know that $K_i^* = \pi_i Z$ and $B_i^* = \pi_i (2 Z^{0.5} - Z)$, this is sufficient. Recall that 
\begin{align*}
    Z \equiv \frac{ K^G + \int_{i \in \B} \pi_i \d i }{ \int_{j \in \B^c} \pi_j \d j } ,
\end{align*}
and the key characteristic of equilibrium is that $Z = 1$. That is, the government has just enough capital to win all remaining contests with certainty. 

In order to keep the comparisons across different values of $|I|$ fair, we scale $K^G$ linearly with $|I|$. For simplicity (and to ensure that $K^G \geq \int_i \pi_i \d i$ is not possible) we choose $K^G = 1 \times |I|$. As before, $\alpha = 3$ and $\pi \sim U[1,11]$. We consider $|I| \in \{10,...,500\}$.

First, \Cref{fig:simulations_change_n} shows results for a single run (for each value of $|I|$). Next \Cref{fig:simulations_change_n_avg} shows the average values from 100 runs (for each value of $|I|$). Clearly there is more variability in a single run than an average, but there is otherwise little material difference.

\clearpage
\begin{figure}[!htb]
    \centering
    \vspace{-20mm}
    \includegraphics[scale=0.95]{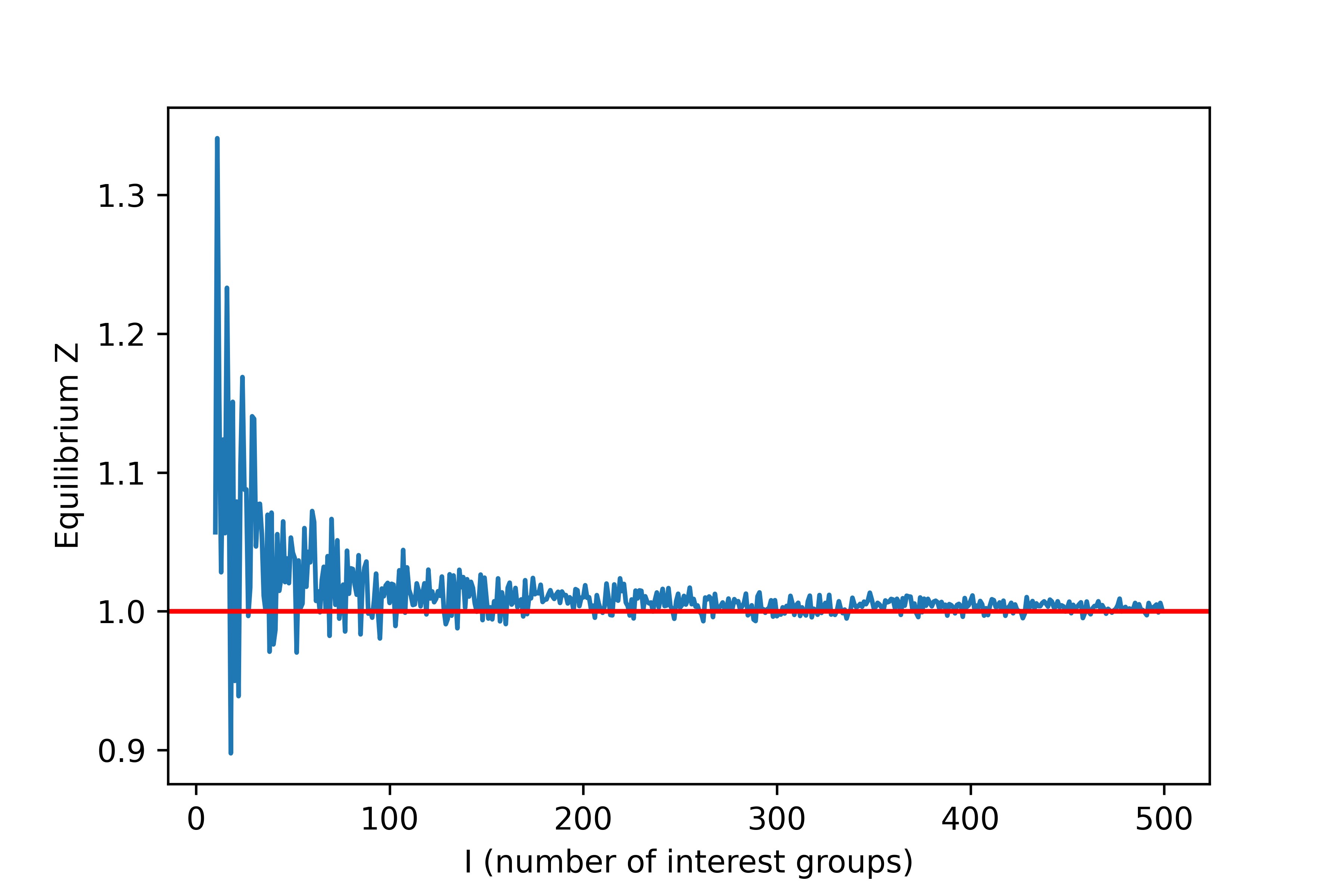}
    \caption{Simulated value for $Z$, for $|I| \in \{10,...,500\}$. Other parameters: $\alpha = 3$, $K^G = 1 \times |I|$, $\pi \sim U[1,11]$.}
    \label{fig:simulations_change_n}
\end{figure}
\begin{figure}[!htb]
    \centering
    \hspace{-12.5mm}
    \includegraphics[scale=0.95]{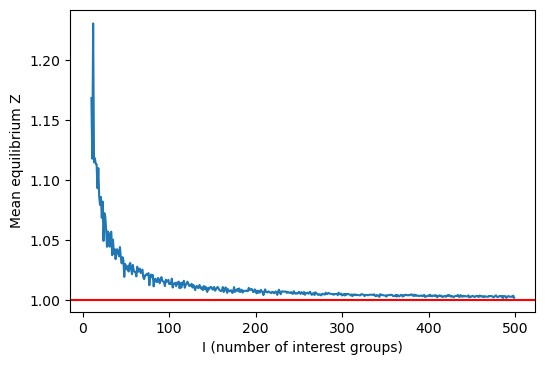}
    \caption{Mean simulated value for $Z$, for $|I| \in \{10,...,500\}$. Other parameters: $\alpha = 3$, $K^G = 1 \times |I|$, $\pi \sim U[1,11]$. Average of 100 runs (i.e. draws of $\pi$).}
    \label{fig:simulations_change_n_avg}
\end{figure}

\clearpage
\section{Technical Discussion of Extensions}\label{sec:extensions}
This appendix formally examines a range of alterations to the model presented in \Cref{sec:model}, and acts as a supplement to \Cref{sec:extensions discussion}.

\subsection{Timing: simultaneous choice of lobbying and political capital}
Here, the model is identical to \Cref{sec:model} with the following exception: stages two and three in the game happen simultaneously.

This has no impact on special interest groups' optimal decision rules in the ``third'' stage. They still take the government's decisions as given when choosing how much lobbying to do.
Hence, $L_i^* = (\pi_i K_i)^{0.5} - K_i$ if $K_i \leq \pi_i$, and $L_i^* = 0$ otherwise. The government, however, faces a different problem. While its utility function is unchanged, it must take interest groups' lobbying as given. This means it cannot take into account how its own choice of $K_i$ will affect $L_i$. This does not change relative allocation of capital across contests. However, it does affect the point at which the government chooses to stop spending more capital. 

Additional spending on a contest can deter lobbying by the interest group. In this extension, the government does not account for this ``deterrence effect'' when making its decisions. This lowers the marginal benefit of spending political capital in a contest. It therefore causes the government to stop spending additional capital sooner. Consequently, interest groups will still do some lobbying in the corner case when the government's has enough political capital to win all (remaining) contests with certainty.

\begin{lem}\label{lem:1 simultaneous}
Fix a set of contributions $B_i$ (and hence $\B$ and $K^T$). Then there is a unique equilibrium:

\noindent (A) \ If $B_i \neq 0$: $L_i^* = 0$, and $K_i^* = 0$. \\
\noindent (B) \ If $B_i = 0$: $L_i^* = (\pi_i K_i^*)^{0.5} - K_i^*$ if $K_i^* \leq \pi_i$ and $L_i^* = 0$ otherwise, and $K_i^* = \pi_i Z$ if $Z \leq \left( \frac{\alpha}{\alpha + 1} \right)^2$ and $K_i^* = \left( \frac{\alpha}{\alpha + 1} \right)^2 \pi_i$ otherwise,
where $Z = \frac{K^T}{\int_{j \in \B^c} \pi_j \d j}$
\end{lem}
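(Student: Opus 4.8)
The plan is to mirror the proof of Lemma 1, changing only the government's first-order condition to reflect that it no longer internalises the deterrence effect of $K_i$ on $L_i$. First I would dispose of Case (A): if $B_i \neq 0$ then $i \in \mathcal{B}$, so $p_i = 0$ identically, hence $\frac{\d u_i}{\d L_i} < 0$ gives $L_i^* = 0$ and $\frac{\d u_G}{\d K_i} < 0$ gives $K_i^* = 0$. This is verbatim the original argument and needs no change, since conceded laws are not contested by either party regardless of timing.

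For Case (B) I would again work with $n$ interest groups and pass to the limit. The interest group still moves with $K_i$ given (simultaneity does not help it — it has no information about $K_i$ but its best response is computed against the equilibrium value), so $L_i^* = (\pi_i K_i)^{0.5} - K_i$ for $K_i \le \pi_i$ and $0$ otherwise, exactly as before. The key difference is the government's problem: it now treats $L_i$ as a fixed number $\ell_i$ (not as a function of $K_i$) when differentiating, even though in equilibrium $\ell_i$ must equal the interest group's best response. So I would write the Lagrangian with $p_i = K_i/(K_i + \ell_i)$ and differentiate holding $\ell_i$ constant:
\begin{align*}
\frac{\d \mathcal{L}}{\d K_i} = \frac{1}{n}\left( \alpha \pi_i \frac{\ell_i}{(K_i + \ell_i)^2} - 1 - \mu \right).
\end{align*}
Then I would impose the equilibrium consistency condition $\ell_i = (\pi_i K_i)^{0.5} - K_i$ (valid while $K_i \le \pi_i$), which gives $K_i + \ell_i = (\pi_i K_i)^{0.5}$, so that $\frac{\ell_i}{(K_i+\ell_i)^2} = \frac{(\pi_i K_i)^{0.5} - K_i}{\pi_i K_i}$, and hence the marginal benefit of political capital becomes (after simplification) $\frac{1}{n}\big(\alpha\big((\pi_i/K_i)^{0.5} - 1\big) - 1 - \mu\big)$, i.e. a clean function of the ratio $K_i/\pi_i$.

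From here the structure is as in Lemma 1. The ratio $K_i^*/K_j^* = \pi_i/\pi_j$ across contested laws still follows from equating marginal benefits (the marginal benefit depends only on $K_i/\pi_i$), so define $Z$ with $K_i^* = \pi_i Z$. I would then ask when the unconstrained optimum ($\mu = 0$) has an interior solution: setting the marginal benefit to zero gives $\alpha\big(Z^{-0.5} - 1\big) - 1 = 0$, i.e. $Z^{-0.5} = \frac{\alpha+1}{\alpha}$, i.e. $Z = \big(\frac{\alpha}{\alpha+1}\big)^2$. So if $Z \le \big(\frac{\alpha}{\alpha+1}\big)^2$ the budget constraint binds and $K_i^* = \pi_i Z$ with $Z = K^T/\int_{j\in\B^c}\pi_j\,\d j$; if $Z$ would exceed this threshold the government stops at the unconstrained optimum $K_i^* = \big(\frac{\alpha}{\alpha+1}\big)^2 \pi_i$, leaving budget unspent, and then $L_i^* = (\pi_i K_i^*)^{0.5} - K_i^* = \pi_i\big(\frac{\alpha}{\alpha+1} - (\frac{\alpha}{\alpha+1})^2\big) > 0$, confirming lobbying is not fully deterred. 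Taking $n \to \infty$ converts the sums to integrals exactly as in Lemma 1.

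The main obstacle — really the only non-mechanical point — is getting the government's first-order condition right: one must differentiate $p_i = K_i/(K_i+L_i)$ treating $L_i$ as a constant, and only afterwards substitute the equilibrium value of $L_i$. Conflating ``take $L_i$ as given'' with ``substitute $L_i^*(K_i)$ before differentiating'' would reproduce the original Lemma 1 threshold $Z \le 1$ rather than $Z \le \big(\frac{\alpha}{\alpha+1}\big)^2$. I would also need to check the corner bookkeeping: that $\big(\frac{\alpha}{\alpha+1}\big)^2 < 1$ for all $\alpha$ (immediate), so the threshold is genuinely lower and the stated best response is well-defined and continuous at the switch point. Uniqueness follows, as in Lemma 1, from strict monotonicity of the relevant marginal conditions.
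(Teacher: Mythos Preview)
Your proposal is correct and follows essentially the same approach as the paper: both differentiate the government's Lagrangian holding $L_i$ fixed (the simultaneous-move condition), then impose the interest group's best response as an equilibrium consistency requirement, and finally identify the threshold by setting $\mu=0$. The only difference is algebraic bookkeeping: the paper solves the government FOC for $K_i^*$ as a function of $L_i$ and $\mu$, then solves the resulting two-equation system to obtain the closed form $K_i^* = \pi_i\bigl(\frac{\alpha}{\alpha+1+\mu}\bigr)^2$, whereas you substitute $\ell_i = (\pi_i K_i)^{1/2}-K_i$ into the FOC first and simplify to a marginal-benefit expression depending only on $K_i/\pi_i$, then read off proportionality and the threshold directly --- but this is the same argument in a slightly different order.
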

\begin{proof}
Set up a standard Lagrangian problem for the government and take First Order Conditions as normal.
\begin{align}
    \mathcal{L} &= \int_{i \in \B^c} \left[ - \alpha (1 - p_i) \pi_i - L_i - K_i \right] \d i - \mu \left(\int_{i \in B^c} K_i \d i - K^T \right) \\
    \frac{d \mathcal{L}}{d K_i} &= \frac{\alpha L_i \pi_i}{(K_i + L_i)^2} - 1 - \mu.
\end{align}
by setting equal to zero and rearranging terms:
\begin{align}\label{eq:opt k simultaneous}
    K_i^* = \left( \frac{\alpha \pi_i L_i }{1 + \mu} \right)^{0.5} - L_i \text{  for all  } i \in \B^c
\end{align}
where $\mu \geq 0$ is the shadow value of relaxing the government's budget constraint. By the same logic as in Lemma 1, the government will never choose $K_i > \pi_i$ for any $i$. Therefore, we can focus on the first part of an interest group's decision rule. This gives two equations in two unknowns: $L_i^* = (\pi_i K_i)^{0.5} - K_i$ and \cref{eq:opt k simultaneous}.
Substituting one into the other and rearranging terms yields:\footnote{During this rearranging, $K_i^* = 0$ appears to be a solution. However, this cannot be an equilibrium because when $L_i = 0$, the marginal return to $K_i$ is infinite. This happens because the contest function is not differentiable at the point $K_i + L_i = 0$ (even though it is differentiable when one of $K_i$ and $L_i$ are zero).}

\begin{align}
    K_i^* = \pi_i   \left( \frac{\alpha}{\alpha + 1 + \mu} \right)^2 \quad \text{ and } \quad
    L_i^* = \pi_i \frac{\alpha (1 + \mu)}{(\alpha + 1 + \mu)^2}.
\end{align}
When the government's budget constraint binds then $\mu > 0$ and $\int_{i \in \B^c} K_i^* \d i = K^T$. Substituting in the value of $K_i^*$:
\begin{align}
    K^T &= \int_{i \in \B^c} \pi_i  \left( \frac{\alpha}{\alpha + 1 + \mu} \right)^2 \d i \\
    \frac{\alpha}{\alpha + 1 + \mu} &= \left( \frac{K^T}{\int_{i \in \B^c} \pi_i \d i} \right)^{0.5} \\
    \mu &= \alpha \left( \frac{K^T}{\int_{i \in \B^c} \pi_i \d i} \right)^{-0.5} - 1 - \alpha
\end{align}
Substituting this back into the equations for $K_i^*$ and $L_i^*$ (and rearranging) yields:
\begin{align}
    K_i^* =  \frac{\pi_i K^T}{\int_{i \in \B^c} \pi_i \d i} \quad \text{ and } \quad L_i^* = \pi_i \left[ \left( \frac{K^T}{\int_{i \in \B^c} \pi_i \d i} \right)^{0.5} - \frac{K^T}{\int_{i \in \B^c} \pi_i \d i}  \right]
\end{align}
When the government's budget constraint does not bind then $\mu = 0$, and $\int_{i \in \B^c} K_i^* \d i < K^T$. Clearly this yields:
\begin{align}
    K_i^* = \pi_i   \left( \frac{\alpha}{\alpha + 1} \right)^2 \quad \text{ and } \quad
    L_i^* = \pi_i \frac{\alpha}{(\alpha + 1)^2}.
\end{align}
In this case the total government spending is $ \left( \frac{\alpha}{\alpha + 1} \right)^2 \int_{i \in \B^c} \pi_i$, so the budget constraint is slack whenever $K^T$ is larger than this term.
\end{proof}

\vspace{5mm}
This feeds through to negotiations over contributions. In this variant, the government will stop spending political capital in contest before the point where it wins for certain, the largest contribution it could extract from a special interest group is lower. This is because the special interest group's outside option (of fighting) is relatively more attractive -- so it is less willing to pay to avoid a fight.
Further, it also reduces the total amount of contributions the government will take. This is because taking contributions is less attractive -- as it now extracts smaller contributions and because the government needs less capital before it reaches the point where it no longer wants to spend on contests.  

\begin{rem}
Suppose $\alpha > 2$. Then the government takes fewer contributions when the government and special interest groups choose contest spending simultaneously than when the government chooses first.
\end{rem}
\begin{proof}
First, recall that in the sequential move version in the main paper, in equilibrium $K^G + \int_{i \in \B} B_i \d i = \int_{i \in \B^c} \pi_i \d i$, and $B_i = \pi_i$ for all $i \in B$. Therefore, $\int_{i \in \B} B_i \d i = \frac{1}{2} [\Pi - K^G]$.\footnote{To see this, use the fact that $B_i = \pi_i$ for all $i \in \B$ to get $K^G + \int_{i \in \B} \pi_i = \int_{i \in \B^c} \pi_i $. Then let $\Pi = \int_i \pi_i = \int_{i \in \B} \pi_i \d i + \int_{i \in \B^c} \pi_i \d i$. Using this definition, and some rearranging yields the result.}

Next, notice that $b_i^{max} = \pi_i (2 Z^{0.5} - Z)$ if $Z \leq \left( \frac{\alpha}{\alpha + 1} \right)^2$, and $b_i^{max} = \frac{\alpha}{\alpha + 1} \frac{\alpha + 2}{\alpha + 1} \pi_i$ otherwise. It follows directly from Lemma 2 that $b_i^{max} = p_i^* \pi_i + L_i^*$ and $b_i^{min} = \frac{1}{1 + \mu} (p_i \alpha \pi_i - L_i^* - (1+\mu)K_i^*)$. These equations do not depend on the particulars of behaviour in the second and third stages. Then using behaviour from \Cref{lem:1 simultaneous} yields characterisation of $b_i^{max}$. 
Now notice that $b_i^{max}$ is increasing in $Z$. So the most an interest group would pay is $\frac{\alpha}{\alpha + 1} \frac{\alpha + 2}{\alpha + 1} \pi_i$.

So, when the government (i) receives the largest contribution possible from interest groups, and (ii) takes contributions up to the point where it would no longer spend extra political capital on contests, we have:
\begin{align}
    K^G + \int_{i \in \B} \frac{\alpha}{\alpha + 1} \frac{\alpha + 2}{\alpha + 1} \pi_i \d i = \int_{i \in \B^c} \left( \frac{\alpha}{\alpha + 1} \right)^2 \pi_i \d i 
\end{align}
Using $\Pi = \int_i \pi_i \d i$ and rearranging terms yields:
\begin{align}
    \int_{i \in \B} \pi_i \d i = \frac{1}{2} \left[ \frac{\alpha}{\alpha + 1} \Pi - \frac{\alpha + 1}{\alpha} K^G \right]
\end{align}
It is clear that for any finite $\alpha$, this is smaller than $\frac{1}{2} [\Pi - K^G]$, which was the quantity of contributions taken in the sequential-move version of the model. The fact that in the simultaneous-move version, we have $B_i < \pi_i$ for all $i \in \B$ then completes the proof.
\end{proof}

\subsection{Government: non-fungible capital}
Here, the model is identical to \Cref{sec:model} with the following exception: political capital spending has no opportunity cost, so
\begin{align}
    u_G = \int_i \left[ - \alpha \pi_i (1 - p_i) - L_i \right] \d i.
\end{align}
This creates two changes. First, the marginal benefit of spending political capital is now higher by one unit when the probability of winning a contest is not yet one. Second, there is now no benefit to doing deals if the political capital cannot \emph{usefully} be spent on winning contests.

\begin{prop}
\noindent If $\alpha > 1$ and $K^G \geq \int_i \pi_i \d i$, there is a unique equilibrium: $B_i = 0$, $K_i = \pi_i$, $L_i = 0$ for all $i \in I$.

\noindent If $\alpha > 1$ and $K^G < \int_i \pi_i \d i$, then in all equilibria: 

    (1) $B_i = \pi_i$, $K_i = 0$, $L_i = 0$ for all $i \in \B$ 
    
    (2) $B_i = 0$, $K_i = \pi_i$, $L_i = 0$ for all $i \in \B^c$
    
    (3) $\B$ is such that $\int_{j \in \B^c} \pi_j \d j = K^G + \int_{j \in \B} \pi_j \d j$.
\end{prop}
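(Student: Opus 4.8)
The plan is to follow the proof of \Cref{prop:eq with bribes} essentially verbatim, replacing Lemmas~1 and~2 by versions adapted to the modified objective $u_G = \int_i[-\alpha\pi_i(1-p_i) - L_i]\,\d i$. The only substantive changes come from two features of this $u_G$: political capital spending no longer carries an opportunity cost, and unspent capital has no value. So I would first re-derive second-stage behaviour. Interest groups' best responses are unchanged, $L_i^* = (\pi_i K_i)^{0.5} - K_i$ for $K_i \le \pi_i$ and $0$ otherwise, since only $u_G$ has changed. Substituting these into the government's Lagrangian and differentiating, the marginal benefit of $K_i$ is now $\tfrac12(\alpha-1)(\pi_i/K_i)^{0.5} + 1$ — one unit larger than in the main model, because the $-K_i$ term that previously cancelled the $+1$ coming from $-L_i$ is absent. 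This is strictly positive for all $K_i < \pi_i$, so the government always wants to push $K_i$ up to $\pi_i$ (winning for sure and fully deterring lobbying) whenever the budget permits; when it does not, equating marginal benefits across contests again yields $K_i^* = \pi_i Z$ and $L_i^* = \pi_i(Z^{0.5}-Z)$ with $Z = K^T/\int_{\B^c}\pi_j\,\d j$. Hence the contest outcomes coincide with those in Lemma~1; what differs is the shadow value of the budget constraint, which is $1 + \tfrac12(\alpha-1)Z^{-0.5}$ when $Z<1$ and drops to $0$ once $Z\ge 1$, since extra capital is then useless.

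Next I would re-derive willingness to pay and to accept. The interest group's problem is untouched, so $b_i^{max} = p_i^*\pi_i + L_i^* = \pi_i(2Z^{0.5}-Z)$ for $Z\le 1$ and $\pi_i$ for $Z>1$, exactly as in Lemma~2. For $b_i^{min}$ the accounting is the same as in Lemma~2 except that freed-up capital and the received contribution are now valued only at the (modified) shadow value of the budget, and there is no direct ``saved-cost'' term. The key qualitative point is that once $Z\ge 1$ the government can already win every remaining contest with certainty, so a marginal unit of capital is worth nothing; conceding law $i$ then costs $\alpha\pi_i$ and buys only a worthless contribution, so no finite $B_i$ is acceptable. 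For $Z<1$ one checks (plugging in the objects above) that $b_i^{min} < b_i^{max}$, so a mutually agreeable deal always exists. Note the contrast with the main model: there is no ``war chest'' motive here, so the $\alpha\le 2$ / $\alpha>2$ split collapses — for every $\alpha>1$ the government does deals exactly up to the point where it can win all remaining contests for sure, and no further.

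I would then reassemble the equilibrium by the same three steps as in \Cref{prop:eq with bribes}. First, $Z<1$ is impossible: it would force $\B^c\neq\emptyset$, and for any $i\in\B^c$ we would have $b_i^{max}>b_i^{min}$, so the bargaining function produces a deal — contradiction. Second, $Z>1$ together with any deal is impossible: for $i\in\B$ we would need $b_i^{min}\le b_i^{max}$, but no contribution is acceptable when $Z>1$. Hence in equilibrium $Z\ge 1$, and $Z>1$ can only occur with $\B=\emptyset$, which requires $Z = K^G/\int_i\pi_i\,\d i\ge 1$. This gives the two cases. If $K^G\ge\int_i\pi_i\,\d i$: no deals, $Z\ge 1$, so $K_i=\pi_i$, $L_i=0$, $p_i=1$ for all $i$, after checking no party wants to deviate. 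If $K^G<\int_i\pi_i\,\d i$: deals must occur but cannot push $Z$ strictly above $1$, so $Z=1$; then the second-stage lemma at $Z=1$ gives $K_i^*=\pi_i$, $L_i^*=0$ for $i\in\B^c$, the willingness lemma at $Z=1$ gives $B_i=\pi_i$ for $i\in\B$ exactly as in the third case of \Cref{prop:eq with bribes}, and substituting $B_i=\pi_i$, $Z=1$ into the definition of $Z$ yields $\int_{j\in\B^c}\pi_j\,\d j = K^G + \int_{j\in\B}\pi_j\,\d j$.

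The delicate point I expect to be the main obstacle is the boundary $Z=1$: the government's shadow value of capital is discontinuous there (positive from below, zero from above), so one must argue carefully both that the equilibrium sits exactly at $Z=1$ and that the deal contributions are pinned to $\pi_i$ — this is precisely the step that in \Cref{prop:eq with bribes} rests on $b_i^{min}$ and $b_i^{max}$ coinciding at $Z=1$, and it must be handled with the same limiting argument here. A second, minor nuisance is that because political capital is now costless to spend and worthless if unspent, any profile with $K_i\ge\pi_i$ and $\int_i K_i\,\d i\le K^T$ is payoff-equivalent when $Z>1$, so ``unique equilibrium'' should be read as uniqueness of the equilibrium outcome, normalising $K_i$ to its minimal level $\pi_i$.
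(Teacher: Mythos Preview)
Your proposal is correct and follows essentially the same route as the paper: re-derive Lemma~1 with the modified first-order condition $\frac{d\mathcal{L}}{dK_i} = \tfrac{1}{2}(\alpha-1)(\pi_i/K_i)^{0.5} + 1 - \mu$, observe that the only change to Lemma~2 is that $b_i^{min}$ fails to exist once $Z\ge 1$ (so the $\alpha\le 2$ case of \Cref{prop:eq with bribes} disappears), and then rerun Steps~1--3 of that proof for all $\alpha>1$. Your explicit flagging of the $Z=1$ boundary and the payoff-equivalence of $K_i\ge\pi_i$ when capital is slack are points the paper's terse proof leaves implicit, but the argument is the same.
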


\begin{proof}
Lemma 1 holds unchanged. The only small change in the proof is that the new utility function leads to 
$$\frac{d \mathcal{L}}{d K_i} = \frac{1}{2} (\alpha - 1) \left( \frac{\pi_i}{K_i} \right)^{0.5} + 1 - \mu.$$

The only change to Lemma 2 is that $b_i^{min}$ does not exist when $Z > 1$. The government is \emph{never} willing to do a deal in this situation. This is because doing a deal involves a cost, but by assumption cannot bring a benefit.

This means that Case (1) of Step 2 in the proof to Proposition 1 no longer applies. As soon as $Z \geq 1$, the government is unwilling to do additional deals. Cases (2) and (3) then apply whenever $\alpha > 1$
\end{proof}

\subsection{Interest Groups: imperfect legislation}
Here, the model is identical to \Cref{sec:model} with the following exception: if a law $i$ is open, then the benefits to the interest group are $(1 - z_i) \pi_i$ and the costs to society are $\alpha (1 - z_i) \pi_i$, with $z_i \in [0,1]$.

Intuitively, this is a setting where the government is only able to mitigate, rather than completely remove, the harm a special interest group can impose on society. Further, its ability to mitigate differs across laws. The parameter $z_i$ captures the effectiveness of legislation, with $z_i = 1$ being the special case corresponding to the main model in \Cref{sec:model}. This is an intuitively appealing extension, as in many real-world settings legislation and regulation will only be partially effective. Preferences are then:

\begin{align*}
    u_i &= (1 - p_i) \pi_i + p_i (1 - z_i) \pi_i - (L_i + B_i)  \tag{2f} \\
    u_G &= \int_i [- \alpha \pi_i (1 - p_i) - \alpha p_i (1 - z_i) \pi_i - L_i - K_i] \d i  + K^T. \tag{3f}
\end{align*}

However, this change has no impact on the analytic results. We can merely replace the interest groups' benefits, $\pi_i$, with a weighted equivalent, where the weights capture the extent to which the government can mitigate harms regarding the relevant law. Define a weighted benefit $\pi_i^{'} \equiv z_i \pi_i$ and rearrange these utility functions to get:

\begin{align*}
    u_i &= (1 - p_i) \pi_i^{\prime} - (L_i + B_i) + C_i \tag{2f'} \\
    u_G &= \int_i [- \alpha (1 - p_i) \pi_i^{\prime} - L_i - K_i + C_i] \d i + K^T \tag{3f'}
\end{align*}
where $C_i = \pi_i (1 - z_i)$. Since $C_i$ is a (law specific) constant term and is unaffected by any agents' decisions, this problem is identical to that in the main model. The only difference is that we replace $\pi_i$ with $\pi^{\prime}_i$. Therefore all results in \Cref{sec:results} will be unaffected, save for the replacement of $\pi_i$ with $\pi_i^{\prime}$.

\newpage
\subsection{Interest Groups: lobbying after winning a contest}
Here, the model is identical to \Cref{sec:model} with the following exception: an interest group $i$ wins the contest over law $i$, then it must engage in a quantity of lobbying $\beta \pi_i$, with $\beta \in [0,1)$, in order to implement the bad law.\footnote{We assume this to be the case regardless of whether (a) both the interest group and the government spend resources contesting the law and the interest group wins, or (b) the government does not participate and the interest group wins the contest by participating but choosing $L_i = 0$.} 
Formally, this is a fourth stage of the model. At the end of the third stage, the winners of each contest are realised (see \Cref{sec:model}). After that, interest groups who won their contest and secured a `bad' law must do an amount of lobbying of at least $\beta \pi_i$, or else the law reverts to being `good'. Intuitively, this is a setting where the interest group needs to engage in some lobbying to ensure swift passage of the law, or to `pay off' members of the legislature.  

Conceptually, this `additional' lobbying is the same as the lobbying in the contests in \Cref{sec:model}. But it will be convenient to denote this additional lobbying $\Tilde{L}_i$ and to keep track of it separately from the lobbying in the contests. Doing so helps keep the derivations clear.

The impact of this variation on the model is simple. It reduces the benefit to the special interest group of winning a contest to $(1 - \beta) \pi_i$ (i.e. the original benefit of winning, less what must be paid to secure swift passage after winning). So, defining $\hat{\pi}_i = (1 - \beta) \pi_i$, the special interest groups' preferences are then:

\begin{align}
    u_i &= (1 - p_i) \hat{\pi}_i - (L_i + B_i).
\end{align}

When the government loses a contest (which happens with probability $(1 - p_i)$), it incurs a cost $\alpha \pi_i$ as before, and now also incurs a cost of $\beta \pi_i$ -- the lobbying the interest group does to secure swift passage of the bad law. So the government's preferences are then:

\begin{align}
    u_G &= \int_i [- (\alpha \pi_i + \Tilde{L}_i) (1 - p_i) - L_i - K_i] \d i  + K^T.
\end{align}

It is now convenient to define $\hat{\alpha} = \frac{\alpha + \beta}{1 - \beta}$. Then substituting in $\Tilde{L}_i = \beta \pi_i$ into the government's preferences and rearranging yields:

\begin{align}
    u_G &= \int_i [- \hat{\alpha} \hat{\pi}_i (1 - p_i) - L_i - K_i] \d i  + K^T.
\end{align}

We now have identical preferences as in \Cref{sec:model}, except with $\hat{\alpha}$, $\hat{\pi}_i$ rather than $\alpha$, $\pi_i$. The results in \Cref{sec:results} will then be unaffected, save for (a) the replacement of  $\alpha$, $\pi_i$ with $\hat{\alpha}$, $\hat{\pi}_i$, and (b) the additional lobbying when the interest group wins its contest. 

The key difference of this extension is that even when deals are possible, lobbying happens in equilibrium. This demonstrates that the finding of no equilibrium lobbying in the main text is not robust to even simple extensions.

\newpage
\section{Extra Results}\label{sec:extra_proofs}
\subsection{Non-cooperative bargaining}
Here, we show that nothing material changes if we have a non-cooperative bargaining stage, where agents play a simpler variant on the Nash Demand Game \cite{nash1953two}. We do need to assume that players do not play weakly dominated strategies in the `bargaining stage' \emph{and} that this behaviour is common knowledge. Otherwise, we could end up in the situation where, even though there exists a mutually agreeable deal, both agents make extreme offers. These could be mutual best responses because each agent is playing a best response due to the extreme demand of the other.

The model is the same is as in \Cref{sec:model}, except for the parts labelled `timing' and `formalising timing and strategies'. For clarity, we write these two parts anew.

\paragraph{Timing.} First, for each law $i$, the government and the special interest group $i$ simultaneously make demands $b^G_i \geq 0$ (the government) and $b_i \geq 0$ (the interest group). They do a \emph{deal} if and only if these demands are mutually compatible, in the sense that $b_i^G \leq b_i$. In that case, the special interest group makes a \emph{contribution}, $B_i$, that is somewhere between the two demands. We also assume that agents do not make demands that are weakly dominated strategies \emph{and} that this behaviour is common knowledge.
Second, the government first chooses an amount of political capital spending, $K_i \geq 0$, for each law $i$. Third, each special interest group chooses a quantity of lobbying $L_i \geq 0$. We can formalise the timing and agents' strategies as follows. 

\paragraph{Formalising timing and strategies.} There are two distinct stages. First, a `negotiation stage', where the government and interest groups agree \emph{deals}. Second, a `contest stage', where they contest laws.

In the `negotiation stage', for each law $i$, the government and the special interest group $i$ simultaneously make demands $d^G_i \geq 0$ (the government) and $d_i \geq 0$ (the interest group). Define a function $D_i: \mathbb{R}^2 \to \{0,1\}$ for each $i$, where $D_i(d^G_i,d_i) = \mathbf{1} \{ d^G_i \leq d_i \}$, and where $\mathbf{1}$ is the usual indicator function. A \emph{deal} is done if and only if $D_i = 1$. 
And define a \emph{contribution} function $B_i:\mathbb{R}^2 \times \{0,1\} \to \mathbb{R}$, where $B_i = f(d^G_i, d_i) \cdot D_i$, and $f:\mathbb{R}^2_+ \to \mathbb{R}$, where $f(d_i^G,d_i)$ is increasing in both arguments and $f(d_i^G,d_i) \in [d_i^G,d_i]$ for all $d_i^G,d_i$. Denote the set of special interest groups that do a deal $\mathcal{B} = \{i \in I: D_i = 1\}$, and its complement, $\mathcal{B}^c \equiv I \setminus \B$.

So the government's strategy in the negotiation stage is $b^G \in \mathbb{R}^I$, and each special interest group's strategy is $b_i \in \mathbb{R}$. Together, these strategies induce a set of deals done, $D: \mathbb{R}^I \times \mathbb{R}^{I} \to \{0,1\}^I$, and the sizes of the bribes paid, $B: \mathbb{R}^I \times \mathbb{R}^{I} \times \{0,1\}^I \to \mathbb{R}^I$.
These deals done plus bribes paid $(D,B) \in \{0,1\}^I \times \mathbb{R}^I$ are a sufficient statistic for the contest stage. Once the bribes have been agreed/paid, it does not matter what the strategies were that led to them.

The `contest stage' itself happens sequentially. First, the government then chooses an amount of political capital capital spending, $K_i \in \mathbb{R}^+$, for each contest $i$. Second, each special interest group $i$ chooses a quantity of lobbying $L_i \in \mathbb{R}^+$. 
So the government's strategy in the contest stage is $K: \{0,1\}^I \times \mathbb{R}^I \to \mathbb{R}^I$ (recalling that $(D,B) \in \{0,1\}^I \times \mathbb{R}^I$ is a sufficient statistic for the outcome of the negotiation), and each special interest group's strategy is $L_i: \{0,1\}^I \times \mathbb{R}^I \times \mathbb{R}^I \to \mathbb{R}$.

\subsubsection*{Impact on the solutions}
Moving to this non-cooperative bargaining set-up will have no meaningful impact on the results. However, it will introduce nuisance equilibrium multiplicity. Specifically, whenever the government and a special interest group $i$ \emph{do not} do a deal in equilibrium, any pair of demands $(d_i,d_i^G)$ such that $d_i^G > d_i$ will be part of an equilibrium. 

It suffices to show that whenever $b_i^{min} \leq b_i^{max}$, then $d^{G*}_i = d_i^* \in [b_i^{min} \leq b_i^{max}]$ and whenever. $b_i^{min} > b_i^{max}$, then $(d^G_i)^* > d_i^*$.
In other words, the non-cooperative bargaining leads to the same outcomes as the cooperative bargaining in \Cref{sec:model}. Whenever there exists a mutually agreeable deal, it gets done at some contribution that both parties find acceptable. And when there no mutually agreeable deal exists, no deal gets done.
We can formalise this claim in the following Lemma.

\begin{lem}
    (1) If $b_i^{min} \leq b_i^{max}$, then in all equilibria, $D_i^* = 1$ and $b_i^{G*} = b_i^* \in [b_i^{min} , b_i^{max}]$.

    (2) Otherwise, $D_i^* = 0$ and $(d_i^G)^* > b_i^*$, with $b_i^G > d_i^{max}$ and $b_i < d_i^{min}$.
\end{lem}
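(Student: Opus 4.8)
The plan is to prove the Lemma by analyzing the simultaneous-demand game law-by-law, exploiting the fact that the only payoff-relevant consequence of the negotiation stage is the pair $(D_i, B_i)$, together with the assumption that players do not use weakly dominated strategies (and that this is common knowledge). Throughout, I would hold fixed the continuation equilibrium of the contest stage, which by Lemma 1 is uniquely determined by $(\B, K^T)$; this is what pins down the outside options $b_i^{min}$ and $b_i^{max}$ from Lemma 2. The key observation is that, taking as given the behaviour in all other negotiations, interest group $i$'s payoff depends on the outcome of its own negotiation only through whether a deal is struck and at what contribution, and similarly for the government (here the continuum assumption matters: a single deal does not move $K^T$ or $\mu$, so the relevant comparison is exactly the one computed in Lemma 2, giving thresholds $d_i^{min} = b_i^{min}$ and $d_i^{max} = b_i^{max}$ for the government and interest group respectively).

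First I would handle part (1), the case $b_i^{min} \le b_i^{max}$. Fix any contribution rule value $f(d_i^G, d_i)$, which lies in $[d_i^G, d_i]$ whenever $d_i^G \le d_i$. I claim demanding $d_i^G > b_i^{max}$ is weakly dominated for the government: against any opposing demand $d_i \le b_i^{max}$ it forecloses a deal that (at any feasible contribution $\le b_i^{max}$, but more relevantly for the dominance argument any realized $B_i$) the government weakly prefers, while against $d_i > b_i^{max}$ it makes no difference — and there exist opposing demands where it strictly hurts. A symmetric argument shows $d_i < b_i^{min}$ is weakly dominated for interest group $i$. Hence, under the no-weakly-dominated-strategies assumption, $d_i^G \in [b_i^{min}, b_i^{max}]$ and $d_i \ge b_i^{min}$ — wait, I need to be careful about the direction; the point is that after deleting weakly dominated demands both players' demands lie in the interval $[b_i^{min}, b_i^{max}]$. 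I would then argue that in equilibrium a deal must be done: if $d_i^G > d_i$ with both in the interval, the interest group can profitably raise $d_i$ to $d_i^G$ (strictly increasing the probability of a deal it strictly prefers, since the resulting contribution is $\le b_i^{max}$), so $d_i^G \le d_i$, i.e. $D_i^* = 1$. Finally I would show $d_i^{G*} = d_i^*$ in equilibrium: if $d_i^{G*} < d_i^*$ the government can raise its demand slightly, still keeping $d_i^{G} \le d_i^*$, weakly increasing $f$ and hence the contribution, a profitable deviation; so both demands coincide at some common value in $[b_i^{min}, b_i^{max}]$, and $B_i^*$ equals that value.

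For part (2), the case $b_i^{min} > b_i^{max}$, I would argue no mutually acceptable contribution exists: any $B_i$ with $B_i \ge b_i^{min}$ is rejected in spirit by the interest group ($B_i > b_i^{max}$ means it strictly prefers no deal) and any $B_i \le b_i^{max}$ makes the government strictly prefer no deal. So in any equilibrium $D_i^* = 0$. To see this is consistent and that the stated demand inequalities hold: with no deal done, the continuation payoffs are the no-deal outside options regardless of the exact demands, so $(d_i^G, d_i)$ with $d_i^G > d_i$ is a mutual best response — neither player can unilaterally create a deal it wants (the government would need $d_i^G \le d_i$ but then $B_i = f(d_i^G,d_i) \ge d_i^G$; for this to beat its outside option it needs $B_i \ge b_i^{min} > b_i^{max} \ge d_i$, impossible since $B_i \le d_i$; symmetrically for the interest group). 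This also explains the multiplicity noted in the text. I would then note the refinement still bites: a government demand below $b_i^{min}$ or an interest-group demand above $b_i^{max}$ remains weakly dominated exactly as in part (1), which yields $b_i^G > d_i^{max}$ and $b_i < d_i^{min}$ in the notation of the statement.

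The main obstacle I anticipate is getting the weak-dominance argument airtight: I must exhibit, for each "bad" demand, a specific opposing demand against which deviating to a "good" demand is \emph{strictly} better, while checking it is \emph{never} worse against any opposing demand — and this requires knowing the contribution rule $f$ is monotone and interior to $[d_i^G, d_i]$, plus using that the continuation play (and hence the sign of the outside-option comparison) is invariant to this single negotiation. A secondary subtlety is the knife-edge $b_i^{min} = b_i^{max}$, where the deal is done at the unique point $B_i = b_i^{min}$ and both players are exactly indifferent; I would note the no-weakly-dominated refinement still forces $d_i^{G*} = d_i^* = b_i^{min} = b_i^{max}$ there, consistent with case (1) as stated.
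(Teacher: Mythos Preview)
Your approach is essentially the same as the paper's: restrict demands to $[b_i^{min}, b_i^{max}]$ via (iterated) weak dominance, then show the demands must coincide in Part~(1); and in Part~(2) show any compatible pair of demands triggers a profitable deviation, while the stated incompatible ones form an equilibrium.

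One wrinkle worth flagging. Your first-round dominance claims in Part~(1) are stated in the wrong direction, as you yourself notice mid-argument. In the first round it is $d_i > b_i^{max}$ that is weakly dominated for the \emph{interest group} (it risks paying more than its walk-away point) and $d_i^G < b_i^{min}$ for the \emph{government}; the bounds you initially assert ($d_i^G \le b_i^{max}$ and $d_i \ge b_i^{min}$) only follow in a \emph{second} round, once common knowledge of no-weak-dominance lets each side condition on the other's restricted strategy set. The paper does exactly this two-step iteration explicitly. Your self-correction (``after deleting weakly dominated demands both players' demands lie in the interval'') lands in the right place, but the written argument should make the iteration explicit rather than leave the first step misdirected.

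In Part~(2) your route is actually slightly cleaner than the paper's. Because $b_i^{min} > b_i^{max}$ there, the first-round dominance bounds $d_i^G \ge b_i^{min}$ and $d_i \le b_i^{max}$ already deliver $d_i^G > b_i^{max}$ and $d_i < b_i^{min}$ in one shot. The paper instead reaches those inequalities via separate best-response deviations (if $d_i^G \le b_i^{max}$ the interest group can match; if $d_i \ge b_i^{min}$ the government can match). Both work; yours recycles the Part~(1) machinery more efficiently.
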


\begin{proof}
There are two parts: (1) $b_i^{max} \geq b_i^{min}$, and (2) $b_i^{max} < b_i^{min}$.

\paragraph{Part (1)} Show that $d_i^G = d_i$ is an equilibrium, then show that common knowledge that players do no play weakly dominated strategies implies that $d_i, d_i^G \in [b_i^{min}, b_i^{max}$], then show that there is a profitable deviation for any $d_i^G \neq d_i$.

\textbf{(i)} Suppose $d_i^G = d_i \in [b_i^{min} , b_i^{max}]$. If the government deviates to some $d_i^{G \prime} > d_i^G$, then there is no deal -- which is strictly worse. And if it deviates to some $d_i^{G \prime} < d_i^G$, then $B_i^{\prime} < B_i$ -- which is strictly worse. If the interest group deviates to some $d_i^{\prime} > d_i$, then $B_i^{\prime} > B_i$ -- which is strictly worse. And if it deviates to some $d_i^{\prime} < d_i$, then there is no deal -- which is strictly worse. Therefore $d_i^G = d_i \in [b_i^{min} , b_i^{max}]$ is an equilibrium.

\textbf{(ii)} Here we need the assumption that players do not play weakly dominated strategies. 
$d_i > b_i^{max}$ is weakly dominated by $d_i^{\prime} = b_i^{max}$. (a) if $d_i^G > d_i > d_i^{\prime}$, both lead to no deal (so no difference in payoff). (b) if $d_i \geq d_i^G > d_i^{\prime}$, then $d_i^{\prime}$ leads to no deal while $d_i$ leads to a deal that is strictly worse than no deal. (c) if $d_i > d_i^{\prime} \geq d_i^G$, then both lead to a deal, but $B_i^{\prime} < B_i$, so $d_i^{\prime}$ is strictly better than $d_i$. An analogous argument shows that $d_i^G < b_i^{min}$ is also weakly dominated. The idea here is the same as for simple second-price auctions: bidding more than your value is weakly dominated.

If $b_i^{min} \leq d_i^G < d_i \leq b_i^{max}$, then government has a profitable deviation to $d_i^{G \prime} = d_i$ -- as this leads to $B_i^{\prime} > B_i$, which is strictly preferred by the government.

Common knowledge that agents will not play weakly dominated strategies then implies that the special interest group will not play $d_i < b_i^{min}$, as doing so becomes weakly dominated conditional on knowing $d_i^G \geq b_i^{min}$. Similarly, this common knowledge implies $d_i^G \leq b_i^{max}$. So we are focusing on $d_i, d_i^G \in [b_i^{min}, b_i^{max}]$.

\textbf{(iii)} If $d_i > d_i^G$ then the government has a profitable deviation to $d_i^{G \prime} = d_i$ as this leads to a deal (and because we are assuming $d_i \in [b_i^{min}, b_i^{max}]$).

\paragraph{Part (2)} Show that failure of any of the three conditions leads to a profitable deviation. Then show that the three conditions together is an equilibrium.
\textbf{(i)} Suppose $d_i^G \leq d_i$. Then $D_i = 1$ and $B_i \in [d_i^G, d_i]$. (a) The government has a profitable deviation if $B_i < b_i^{min}$ and (b) the interest group has a profitable deviation if $B_i > b_i^{max}$. But as $b_i^{max} < b_i^{min}$, either (a) or (b) must hold for any value of $B_i$. Therefore at least one agent has a profitable deviation. 
\textbf{(ii)} Suppose $d_i^G \leq b_i^{max}$ and $d_i^G > d_i$. Then there is a profitable for the special interest group: $d_i^{\prime} = d_i^G$, which yields a deal at a value of $B_i$ that she prefers to no deal. So $b_i^{max} \geq d_i^G > d_i$ cannot be an equilibrium.
\textbf{(iii)} Suppose $d_i \geq b_i^{min}$ and $d_i^G > d_i$. Then there is a profitable for the government: $d_i^{G \prime} = d_i$, which yields a deal at a value of $B_i$ that he prefers to no deal.

\textbf{(iv)} Suppose $d_i^G > d_i$, with $d_i^G > b_i^{max}$ and $d_i < b_i^{min}$. A deviation for the government would either lead to no deal (i.e. no change in outcome) or a deal at $B_i < b_i^{min}$ -- which is strictly worse than no deal. Similarly, a deviation for the special interest group would either lead to no deal (i.e. no change in outcome) or a deal at $B_i > b_i^{max}$ -- which is a strictly worse outcome than no deal. So no profitable deviations. Hence an equilibrium.
\end{proof}

\subsection{Lagrange multipliers}
We now prove that with a measure of agents, the Lagrange multiplier does not change when the government does a single deal with interest group $i$. While obvious, it is asserted but not proved in Lemma 2.

\begin{lem}
    $\hat{\mu} = \mu$
\end{lem}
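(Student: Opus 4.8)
The claim is that in the continuum model, the shadow value of the government's budget constraint is the same whether or not the government does a deal with a single interest group $i$. The plan is to exploit the fact that a single interest group has measure zero, so removing law $i$ from the contestable set $\B^c$ and adding its contribution $B_i$ to $K^T$ changes neither the integral $\int_{j \in \B^c} \pi_j \, \d j$ nor the effective total political capital by a positive amount. Concretely, I would start from the characterisation of $\mu$ derived in the proof of Lemma 1: when the budget constraint binds (i.e. $Z \leq 1$), equating marginal benefits across contests gives $\mu = \tfrac{1}{2}(\alpha - 1)(\pi_j / K_j^*)^{0.5}$ for each $j \in \B^c$, and substituting $K_j^* = \pi_j Z$ yields $\mu = \tfrac{1}{2}(\alpha - 1) Z^{-0.5}$, which depends on the primitives only through $Z = K^T / \int_{j \in \B^c}\pi_j \, \d j$.

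The core step is then to show $Z$ is invariant (to first order, i.e. exactly in the continuum limit) to the government doing one extra deal. Let $Z$ be the ratio before the deal with $i$, and $\hat Z$ the ratio after. After the deal, the numerator becomes $K^T + B_i$ and the denominator becomes $\int_{j \in \B^c}\pi_j\,\d j - \int_{\{i\}} \pi_j \, \d j = \int_{j \in \B^c}\pi_j\,\d j$, since a single point has Lebesgue measure zero. Likewise $K^T + B_i$: although $B_i > 0$ as a number, it is a contribution from a single agent, and in the continuum formulation total contributions enter $K^T$ only through the integral $\int_j B_j \, \d j$, to which a single $B_i$ contributes nothing. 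Hence $\hat Z = Z$, and therefore $\hat\mu = \tfrac{1}{2}(\alpha-1)\hat Z^{-0.5} = \tfrac{1}{2}(\alpha-1)Z^{-0.5} = \mu$. When $Z > 1$ (budget constraint slack), both $\mu$ and $\hat\mu$ equal zero trivially, so the conclusion also holds; and the deal in question is one the government evaluates precisely when $Z \leq 1$ anyway.

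The only subtlety — and the step I would be most careful about — is making the measure-zero argument rigorous rather than hand-wavy: one should really phrase it as taking the $n \to \infty$ limit of the finite-$n$ model (as Lemma 1's proof does), where with $n$ groups the denominator is $\tfrac{1}{n}\sum_{j \in \B^c}\pi_j$ and a single deal changes it by $\tfrac{1}{n}\pi_i \to 0$, while the numerator changes by $B_i$ with $B_i \leq b_i^{max} \leq \pi_i$ bounded, so the induced change in $Z$ is $O(1/n) \to 0$. I would present the argument in that limiting form for precision, then conclude $\hat\mu = \mu$. No serious obstacle is anticipated beyond stating this limit cleanly.
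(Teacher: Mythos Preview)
Your proposal is correct and follows essentially the same route as the paper: express $\mu$ as $\tfrac{1}{2}(\alpha-1)Z^{-1/2}$, then show $\hat Z = Z$ by passing to the $n \to \infty$ limit of the finite-$n$ model, where a single deal perturbs both numerator and denominator by terms of order $1/n$. One small slip to fix when you write it up: in the finite-$n$ formulation used in Lemma~1, a single contribution enters $K^T$ as $\tfrac{1}{n}B_i$, not $B_i$, so the numerator change is $O(1/n)$ just like the denominator change---as written, your ``numerator changes by $B_i$ bounded'' would not by itself yield an $O(1/n)$ perturbation of $Z$.
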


\begin{proof}
As with Lemma 1, it will be clearer to do the derivation for a finite number ($n$) of special interest groups, and then take the limit as $n \to \infty$.
From the proof to Lemma 1, we have that $\mu = 0.5 (\alpha - 1) \pi_i^{0.5} (K_i^*)^{-0.5}$. And then substituting in $K_i^* = \pi_i Z$ yields $\mu = 0.5 (\alpha - 1) Z^{-0.5}$. It is then clear that $\frac{\mu}{\hat{\mu}} = \left( \frac{\hat{Z}}{Z} \right)^{0.5}$. $Z$ is defined when the government does not do a deal with $i$, and $\hat{Z}$ is when it does. For convenience, we we $K^T$ here refer to the government's total political capital when it does not do a deal with $i$. So we have:
\begin{align*}
    Z = \frac{K^T}{\frac{1}{n} \left( \sum_{j \in \B^c \setminus i} \pi_j + \pi_i \right)} \quad \text{ and } \quad 
    \hat{Z} = \frac{K^T + \frac{1}{n} B_i}{\frac{1}{n} \sum_{j \in \B^c \setminus i} \pi_j }
\end{align*}
It is then immediate that as $n \to \infty$, we have (i) $K^T = K^T$, (ii) $\frac{1}{n} \sum_{j \in \B^c \setminus i} \pi_j \to \int_{j \in \B^c \setminus i} \pi_j \d j$, (iii) $\frac{1}{n} \pi_i \to 0$, and (iv) $\frac{1}{n} B_i \to 0$. Therefore we have
\begin{align*}
   \lim_{n \to \infty} Z = \frac{K^T}{\int_{j \in \B^c \setminus i} \pi_j} \quad \text{ and } \quad 
    \lim_{n \to \infty} \hat{Z} = \frac{K^T}{\int_{j \in \B^c \setminus i} \pi_j } 
\end{align*}
Which completes our result.
\end{proof}

\end{document}